\setlist[itemize]{itemsep=4pt, topsep=4pt, parsep=0pt, partopsep=0pt}
\setlist[enumerate]{itemsep=4pt, topsep=4pt, parsep=0pt, partopsep=0pt}
\theoremstyle{plain}
\newtheorem{thm}{Theorem}
\newtheorem{lemma}[thm]{Lemma}
\theoremstyle{definition}
\theoremstyle{remark}
\DeclareMathOperator{\E}{\mathbb{E}}
\DeclareMathOperator{\Var}{Var}
\DeclareMathOperator{\tr}{tr}
\DeclareMathOperator{\vecl}{vecl}
\newcommand{\bvar}{\boldsymbol{\varepsilon}}
\newcommand{\bnu}{\boldsymbol{\nu}}
\newcommand{\bpi}{\boldsymbol{\pi}}
\newcommand{\bgamma}{\boldsymbol{\gamma}}
\newcommand{\brho}{\boldsymbol{\rho}}
\newcommand{\bbeta}{\boldsymbol{\beta}}
\newcommand{\balpha}{\boldsymbol{\alpha}}
\newcommand{\bmu}{\boldsymbol{\mu}}
\newcommand{\boldeta}{\boldsymbol{\eta}}
\newcommand{\bOmega}{\boldsymbol{\Omega}}
\newcommand{\bSigma}{\boldsymbol{\Sigma}}
\newcommand{\bLambda}{\boldsymbol{\Lambda}}
\newcommand{\bXi}{\boldsymbol{\Xi}}
\newcommand{\ba}{{\mathbf a}}
\newcommand{\be}{{\mathbf e}}
\newcommand{\bx}{{\mathbf x}}
\newcommand{\by}{{\mathbf y}}
\newcommand{\bw}{{\mathbf w}}
\newcommand{\bD}{{\bf D}}
\newcommand{\bA}{{\bf A}}
\newcommand{\bB}{{\bf B}}
\newcommand{\bE}{{\bf E}}
\newcommand{\bG}{{\bf G}}
\newcommand{\bI}{{\bf I}}
\newcommand{\bK}{{\bf K}}
\newcommand{\bP}{{\bf P}}
\newcommand{\bS}{{\bf S}}
\newcommand{\bX}{{\bf X}}
\newcommand{\bR}{{\bf R}}
\newcommand{\bV}{{\bf V}}
\newcommand{\bQ}{{\bf Q}}
\newcommand{\bJ}{{\bf J}}
\newcommand{\bW}{{\bf W}}
\newcommand{\bM}{{\bf M}}
\newcommand{\bT}{{\bf T}}
\newcommand{\bH}{{\bf H}}
\begin{document}

\def\spacingset#1{\renewcommand{\baselinestretch}%
{#1}\small\normalsize} \spacingset{1}


\title{\bf Generalized Correlation Regression for Disentangling Dependence in Clustered Data}
\author{
    Yibo Wang\footnote{Department of Statistics and Finance, University of Science and Technology of China. Email: yibooo@mail.ustc.edu.cn. This research was conducted during a visit with Professor Leng, supported by a grant from the University of Science and Technology of China},~~
    Chenlei Leng\footnote{Department of Applied Mathematics, Hong Kong Polytechnic University. Email: chenlei.leng@gmail.com},~~
    Cheng Yong Tang\footnote{Department of Statistics, Operations, and Data Science, Temple University. Email: yongtang@temple.edu}
    \medskip
    \\
    University of Science and Technology of China$^*$,\\
    Hong Kong Polytechnic University$^\dag$, \\
    Temple University$^\ddag$
}
\date{}
\maketitle

\bigskip

\begin{abstract}
Clustered and longitudinal data are pervasive in scientific studies, from prenatal health programs to clinical trials and public health surveillance. Such data often involve non-Gaussian responses---including binary, categorical, and count outcomes---that exhibit complex correlation structures driven by multilevel clustering, covariates, over-dispersion, or zero inflation. Conventional approaches such as generalized mixed-effects models (GLMMs) and generalized estimating equations (GEEs) can capture some of these dependencies, but they are often too rigid or impose restrictive assumptions that limit interpretability and predictive performance. 

We investigate \emph{generalized correlation regression} (GCR), a unified framework that models correlations directly as functions of interpretable covariates while simultaneously estimating marginal means. By applying a generalized $z$-transformation, GCR guarantees valid correlation matrices, accommodates unbalanced cluster sizes, and flexibly incorporates covariates such as time, space, or group membership into the dependence structure. Through applications to modern prenatal care, a longitudinal toenail infection trial, and clustered health count data, we show that GCR not only achieves superior predictive performance over standard methods, but also reveals family-, community-, and individual-level drivers of dependence that are obscured under conventional modeling. These results demonstrate the broad applied value of GCR for analyzing binary, count, and categorical data in clustered and longitudinal settings.
\end{abstract}

\noindent%
\noindent\textit{Keywords:} Clustered data; Longitudinal data; Correlation regression; Generalized $z$-transformation; Regression modeling
\vfill

\newpage
\spacingset{1} 

\section{Introduction}\label{sec:introduction}

\subsection{Background and Overview}

Dependent data are ubiquitous in applied statistics. Two common settings illustrate this:  
(i) sequential observations, leading to time-series or longitudinal measurements; and  
(ii) clustered or multilevel observations, such as patients within hospitals or children within classrooms.  
In both cases, correlation is intrinsic: repeated measurements on the same individual tend to move together over time, and individuals sharing the same setting are subject to common influences---protocols, environments, or practitioner effects.  

Failure to account for such dependence can lead to biased effect estimates, miscalibrated standard errors, efficiency loss, and ultimately unreliable conclusions. A central challenge in applied statistics, therefore, is to model dependence in ways that are both flexible and interpretable, while remaining computationally feasible. This requires methods that handle heterogeneous cluster sizes and irregular timing, allow correlation to vary with covariates, space, or time, and deliver valid inference even under partial model misspecification. Balancing practical relevance with methodological rigor motivates this paper.  

For continuous outcomes, a rich toolbox already exists. Classical linear mixed-effects models represent correlations through latent random effects \citep{laird1982random, pinheiro2009mixed}; covariance pattern models directly impose structured correlation matrices \citep{stroup2018sas}; and unconstrained parameterizations such as the modified Cholesky decomposition or matrix logarithmic transforms enable flexible yet interpretable modeling of high-dimensional covariance structures \citep{pourahmadi1999joint, zhang2015joint}.  

For broader outcome types such as binary, categorical, and count data, the challenges are considerably greater.  
Unlike Gaussian outcomes, where covariance modeling provides a unified framework, discrete responses impose intrinsic constraints.  
Binary and categorical variables are bounded by their outcome space (e.g., 0/1 or a finite set of categories), which severely restricts the allowable dependence structures.  
Count data often display over-dispersion or zero inflation, further complicating the specification of both marginal distributions and correlations.  
Such features arise routinely in applied studies---clustered health outcomes, longitudinal educational assessments, or ecological surveys---where complex dependence is intertwined with distributional irregularities.  
Consequently, it is effectively infeasible to construct a single joint distributional model that simultaneously respects the marginal constraints of non-Gaussian outcomes and captures their within-cluster correlations.  
This limitation stands in sharp contrast to the Gaussian case, where matrix-based formulations provide a coherent and tractable strategy that has been widely adopted in practice.  


As detailed in Section~\ref{sec:real_1}, our primary illustrating example is the 
\emph{modern prenatal care dataset}, a landmark study conducted by researchers at Princeton University 
\citep{Pebley:1996, Rodriguez:2001}. 
The dataset records clustered binary outcomes on whether births received modern prenatal care, drawn from the 
\emph{1987 National Survey of Maternal and Child Health in Guatemala}. 
After preprocessing, it consists of 2,223 births from 1,414 mothers across 160 communities, with covariates measured at three levels: 
community-level (e.g., accessibility of healthcare facilities), 
maternal-level (e.g., ethnic-linguistic identity and socioeconomic status), 
and individual-level (e.g., maternal age). 
This creates a natural \emph{multi-level clustering structure}, where correlations arise both within mothers and within communities.

This dataset raises several critical scientific questions of both methodological and applied importance:  
\begin{enumerate}
    \item \textbf{Covariate effects:} How do community-, maternal- and individual-level covariates shape prenatal care utilization?
    \item \textbf{Dependence beyond hierarchy:} Are there systematic correlations in care choices across mothers with similar socioeconomic or geographic backgrounds?  
    \item \textbf{Equity and intervention:} How can identifying such dependence patterns highlight sources of inequity and inform targeted, cost-effective public health interventions?
\end{enumerate}

Answering these questions is of profound practical importance. 
Prenatal care utilization is widely recognized as a key determinant of maternal and infant health outcomes, and identifying the drivers of correlation---whether arising from community, maternal, or cross-level factors---has direct implications for targeting interventions. 
Detecting such dependence not only clarifies sources of inequity but also informs the design of cost-effective policies and field studies aimed at improving maternal and child health in resource-constrained settings.

The binary response indicates whether a given birth received modern prenatal care.  
Existing analyses typically model such choices using community-, maternal-, and individual-level covariates, but they often ignore or only partially capture the dependence inherent in the data.  
In reality, responses are not independent: births from the same mother and mothers from the same community share unobserved influences that induce correlation.  
Our analysis further reveals that dependence extends beyond hierarchical clustering, with correlations also arising across mothers who share \emph{similar covariates}---for example, those in comparable socioeconomic contexts.  
Such patterns indicate potential systematic inequities in healthcare access that conventional models overlook.  
From a substantive perspective, detecting these cross-cluster dependencies is essential: it confirms the role of observed covariates, reveals hidden sources of inequity, and identifies population subgroups where targeted field studies and interventions may be most effective.  

This example highlights three central \emph{statistical challenges}:  
\begin{enumerate}
    \item \textbf{Complex within-cluster dependence}, arising from multiple hierarchical levels (births nested within mothers, and mothers nested within communities);  
    \item \textbf{Non-Gaussian outcomes}, where binary responses preclude direct use of Gaussian-based correlation modeling;  
    \item The need for \textbf{flexible and interpretable correlation structures} that capture not only traditional clustering effects but also dependence driven by covariates.  
\end{enumerate}

Motivated by these challenges, we propose an approach that can be informally summarized as a \emph{regression model for correlations}:  
\[
\text{(Informal)} \qquad \text{Correlation matrix } \bR \;\sim\; \text{Covariates},
\]
where the covariates may include both continuous and categorical factors, drawn from multiple hierarchical levels.  
Unlike conventional regression, which links covariates to the \emph{mean response}, our framework links them directly to the \emph{dependence structure}, enabling a flexible yet interpretable characterization of within- and cross-cluster correlations.

The modeling device is key to answering the scientific questions raised by our motivating datasets. By treating correlations as regression targets, it uncovers dependence patterns beyond mean-based analyses. In the prenatal care study, for instance, it enables formal testing of whether correlations in healthcare choices arise not only within mothers or communities but also across mothers with similar socioeconomic profiles, thereby uncovering potential inequities that conventional models fail to detect, highlighting questions worth further field study or experimental investigation.

\subsection{Existing Methods}

Classical approaches to clustered and longitudinal data rely on linear mixed-effects models \citep{laird1982random,pinheiro2009mixed} and covariance pattern models \citep{stroup2018sas}, which accommodate structured correlations via random effects or parametric covariance specifications. Extensions to non-Gaussian outcomes through generalized linear mixed models (GLMMs) are well established \citep{breslow1993approximate}, but they often require restrictive random-effects structures, and inference can be computationally burdensome, with likelihood-based tests exhibiting convergence and asymptotic challenges \citep{Self:1987, mcculloch2004generalized}.  

The generalized estimating equations (GEE) framework \citep{liang1986longitudinal} provides a popular marginal alternative, leveraging a ``working" correlation structure to improve efficiency in mean parameter estimation. Yet, commonly employed specifications---exchangeable, autoregressive, or independence---rarely capture the complex multi-level dependence encountered in modern applications. Recent extensions have sought to enrich this framework by modeling covariance as a function of covariates, for example through covariance regression \citep{pourahmadi1999joint, pourahmadi2000maximum, hoff2012covariance, fox2015bayesian, zou2017covariance}. These methods open the door to regression-like formulations of correlation, but most advances remain restricted to Gaussian outcomes.  

Copula-based methods \citep{nelsen2006introduction,peter2007correlated,nikoloulopoulos2013copula} have gained traction as a flexible means of decoupling marginal distributions from dependence, and recent developments have adapted copulas to clustered binary and ordinal responses \citep{panagiotelis2012pair, krupskii2025factor}. Despite their flexibility, copula models often lack direct interpretability in terms of regression effects on correlations, and computational scalability becomes problematic in higher-dimensional or multi-level clustering scenarios.  

Closest to our work is the recent applied formulation of correlation regression in \citet{hu2024applied}, which directly regresses correlations on covariates. While conceptually appealing, their method is confined to Gaussian responses and does not naturally extend to binary, categorical, or count data, thereby leaving a gap that our work addresses.

\subsection{Our Contributions}  

We introduce a \emph{generalized correlation regression (GCR)} framework that directly treats correlations as regression targets, providing a unified approach for both Gaussian and non-Gaussian clustered data. The core methodological innovation is the \emph{generalized \(z\)-transformation}: by applying a matrix logarithm to the correlation matrix and vectorizing its strictly lower-triangular elements, correlations are mapped to an unconstrained Euclidean space. This transformation guarantees positive definiteness upon inversion, naturally accommodates unbalanced cluster sizes, and allows pairwise correlations to depend on interpretable covariates such as spatial or temporal proximity, group membership, or baseline similarity. Estimation proceeds via an estimating-equation and pseudo-likelihood strategy, ensuring numerical stability, accuracy, and practical implementability.  

Through multiple real-world applications---including modern prenatal care, a toenail infection clinical trial, and clustered health count data---we demonstrate how GCR enables fine-grained exploration of dependence beyond standard mixed-effects or GEE models. We provide a practical guide for constructing covariates, visualizing cluster-specific dependence, and interpreting parameter estimates. For formal inference, the framework supports benchmark tests, such as large-sample Wald tests, allowing rigorous assessment of whether observed dependence patterns differ significantly from simpler null structures (e.g., independence or exchangeable correlation). These capabilities are particularly valuable in applied studies, where uncovering residual dependence informs targeted interventions and experimental design.  

Methodologically, GCR contributes a flexible, interpretable, and valid parameterization of correlation matrices for a wide range of response types, including binary, categorical, and count outcomes. We establish theoretical guarantees for identifiability, consistency, and asymptotic behavior of the proposed estimators. Moreover, our approach provides a practical solution for testing random or clustering effects in classical mixed-effects models---tasks that are notoriously difficult due to boundary constraints and computational intractability---by leveraging the generalized \(z\)-transformation for stable, valid inference.  

Taken together, GCR offers a comprehensive, application-oriented framework for modeling and inferring complex correlation structures in clustered and longitudinal data, combining interpretability, flexibility, and rigorous statistical support.

The paper is organized as follows. Section~\ref{sec:methodology} presents the proposed methodology, including model specification, the generalized $z$-transformation, and parameter estimation. Section~\ref{sec:real_1} illustrates the approach using the prenatal care dataset. Section~\ref{sec:real_2} provides an additional example on a longitudinal binary dataset. Additional technical details, simulations, and further data analysis are in the Supplementary Material. All code is publicly available on \url{https://github.com/yibowang2004/gcr} for reproducibility and practical use. All GEE and GLMM analyses in this paper are conducted using the \texttt{gee} package \citep{Carey:2024} and the \texttt{glmmTMB} package \citep{McGillycuddy:2025} in \textsf{R}, respectively.

\section{Methodology}\label{sec:methodology}

\subsection{Model Settings}

Consider a generic grouped response vector 
\(\by_i = (y_{i1}, \dots, y_{im_i})^\top\), 
where \(y_{ij}\) denotes the \(j\)th observation in group \(i\), and the group size \(m_i\) may vary. Associated covariates are collected in 
\(\bX_i = (\bx_{i1}, \dots, \bx_{im_i})^\top\), 
with each \(\bx_{ij} \in \mathbb{R}^p\) representing characteristics linked to \(y_{ij}\). This framework naturally encompasses diverse applied settings. For instance, in a prenatal care study, \(i\) indexes communities, \(y_{ij}\) indicates whether the \(j\)th birth received modern care, and \(\bx_{ij}\) records maternal and household covariates; in an ecological study, \(i\) may index nesting locations, \(y_{ij}\) could be chick tick counts, and \(\bx_{ij}\) includes brood height or year of observation. This setup allows explicit modeling of within-group correlations while accommodating individual-level covariates.

For the response \(y_{ij}\), we assume a generalized linear model (GLM) with density
\begin{equation}\label{eq:marginal}
    f(y_{ij}) = \exp\!\Big\{\frac{y_{ij}\theta_{ij} - a(\theta_{ij}) + b(y_{ij})}{\phi}\Big\},
\end{equation}
where the canonical parameter \(\theta_{ij} = h(\bx_{ij}^\top \bbeta)\) for a known link function \(h(\cdot)\).  
Here, \(\bbeta\) denotes the regression coefficients of primary interest, and \(\phi\) is a nuisance scale parameter that may be known (e.g., for Bernoulli or Poisson responses) or estimated from the data.

Assuming the marginal distribution of \(y_{ij}\) follows \eqref{eq:marginal}, its first two moments are
\begin{align}\label{eq:moments}
    \E(y_{ij}) = a'(\theta_{ij}), 
    \qquad 
    \Var(y_{ij}) = \phi\, a''(\theta_{ij}),
\end{align}
where \(a'(\cdot)\) denotes the first derivative of \(a(\cdot)\).  
Since the GLM sets \(\theta_{ij} = h(\bx_{ij}^\top \bbeta)\), the mean function 
\(\mu(\cdot) = a'\big(h(\cdot)\big)\) links the linear predictor \(\bx_{ij}^\top \bbeta\) directly to \(\E(y_{ij})\).  
In practice, this provides a clear interpretation: once covariates \(\bx_{ij}\) are specified---for example, treatment and time in a clinical trial or chick characteristics in an ecological study---the GLM framework systematically quantifies how these predictors influence the expected outcome.

Several common distributions arise as special cases within the GLM framework:  
\begin{itemize}
    \item[1.] \textbf{Gaussian (continuous data):}  
    $a'(x) = x$, \ $h(x) = x \;\Rightarrow\; \mu(x) = x$.  

    \item[2.] \textbf{Poisson (count data):}  
    $a'(x) = e^x$, \ $h(x) = x \;\Rightarrow\; \mu(x) = e^x$.  

    \item[3.] \textbf{Bernoulli (binary data):}  
    $a'(x) = \dfrac{e^x}{1 + e^x}$, \ $h(x) = x \;\Rightarrow\; \mu(x) = \dfrac{e^x}{1 + e^x}$.  

    \item[4.] \textbf{Gamma (positively skewed continuous data):}  
    $a'(x) = -1/x$, \ $h(x) = -e^{-x} \;\Rightarrow\; \mu(x) = e^x$.  
\end{itemize}

For general responses, specifying the full joint distribution of $\by_i$ is often intractable, though the components $y_{ij}$ are dependent. We therefore adopt a marginal specification via \eqref{eq:marginal}, ensuring the moment conditions in \eqref{eq:moments} hold. In this setting, generalized estimating equations (GEE) \citep{liang1986longitudinal} provide a practical approach for consistently estimating $\bbeta$, and thus the mean and variance functions, without requiring a full joint distribution.

The GEE approach is based on the estimating equations derived from the zero-mean vector $\by_i - \bmu_i$, where $\bmu_i = \E(\by_i)$. A key strength of GEE lies in its explicit use of the covariance structure of $\by_i$, given by  
\begin{align}\label{eq:vcov}
    \bV_i = \bA_i^{1/2} \bR_i \bA_i^{1/2},
\end{align}
where $\bA_i$ is a diagonal matrix with entries $\Var(y_{ij}) = \phi\, a''(\theta_{ij})$, and $\bA_i^{1/2}$ denotes its matrix square root.  

Let $\bD_i = \partial \bmu_i / \partial \bbeta$. The GEE estimator $\hat{\bbeta}_n$ solves
\begin{equation}\label{eq:GEE}
    \sum_{i=1}^n \bD_i^\top \bV_i^{-1} (\by_i - \bmu_i) = \mathbf{0}.
\end{equation}
Under standard regularity conditions, $\hat{\bbeta}_n$ is consistent for $\bbeta_0$ and asymptotically normal, providing a basis for statistical inference; see the Supplementary Material for details.

When the scale parameter $\phi_0$ is unknown, it can be estimated via Pearson residuals:
\begin{equation}\label{eq:phi_1}
    \hat{r}_{ij} = 
    \frac{y_{ij} - a'\!\big(\hat{\theta}_{ij}\big)}
         {\sqrt{a''(\hat{\theta}_{ij})}}, \quad 
    \hat{\theta}_{ij} = h(\bx_{ij}^\top \hat{\bbeta}_n),
\end{equation}
with the corresponding estimator
\begin{equation}\label{eq:phi_2}
    \hat{\phi}_n = \frac{1}{N - p} 
    \sum_{i=1}^n \sum_{j=1}^{m_i} \hat{r}_{ij}^2,
\end{equation}
where $N = \sum_{i=1}^n m_i$ is the total sample size and $p$ the number of regression parameters. 
Under mild conditions, $\hat{\phi}_n$ is consistent for $\phi_0$ and satisfies $\hat{\phi}_n - \phi_0 = O_p(n^{-1/2})$ when $\bbeta_0$ is known \citep{liang1986longitudinal}.

While GEE provides consistent estimates for the marginal mean parameters, its traditional focus is on $\bbeta$, treating the working correlation $\bR_i$ as a nuisance. 
In many applied settings, however, understanding the structure of $\bR_i$ itself is scientifically important: it encodes how outcomes co-vary within clusters and how dependence may vary with covariates. 
This observation motivates our \emph{generalized correlation regression} approach, where $\bR_i$ is directly modeled as a function of interpretable covariates, enabling richer insights into clustering, temporal, and covariate-driven dependence patterns that conventional GEE or GLMM methods may overlook.

\subsection{Our Approach with Generalized z-Transformation}

In the GEE framework, the matrix $\bR_i$ in \eqref{eq:vcov} represents the \emph{working correlation matrix}, capturing correlations among observations within the same group.  
Traditionally, $\bR_i$ is parameterized by a low-dimensional vector $\balpha$ to accommodate a limited set of structures such as exchangeable, autoregressive (AR), moving average (MA), or ARMA-type correlations \citep{liang1986longitudinal}.  
Estimation of $\balpha$ typically relies on moment-based methods or pseudo-likelihood approaches.

However, modeling $\bR_i$ poses several challenges.  
First, each $\bR_i$ must be a valid correlation matrix: symmetric, positive definite, and with unit diagonals.  
Second, the dimension of $\bR_i$ often varies across groups because group sizes $m_i$ are unbalanced.  
Third, correlations may depend on a range of covariates, both continuous and categorical, reflecting individual-level, group-level, or temporal effects.  
In practice, these challenges are ubiquitous: longitudinal patients often have different numbers of visits, and survey respondents may differ in demographic or geographic characteristics.  
While GEE provides a framework for handling correlated data, specifying, estimating, and validating complex correlation structures remains difficult, leading practitioners to rely on simple working assumptions such as independence or exchangeable correlations.

To address these challenges, we adopt the \emph{generalized $z$-transformation}, a flexible device that enables parsimonious modeling of $\bR_i$ while automatically ensuring that it remains a valid correlation matrix.  
This approach is particularly advantageous in applied settings with heterogeneous group sizes, providing a coherent framework without ad hoc fixes or restrictive assumptions.  
Moreover, the parameters obtained under the generalized $z$-transformation retain intuitive interpretations, linking correlation strength to covariates or structural features of the data (e.g., temporal ordering or group membership), thereby enhancing transparency and practical usability for researchers and practitioners.

We begin by introducing some notation.  
Let $\mathbf{A} \in \mathbb{R}^{m \times m}$ be a symmetric matrix.  
The operator $\operatorname{vecl}(\mathbf{A})$ denotes the vector obtained by stacking the strictly lower-triangular elements of $\mathbf{A}$ in column-major order, so that \(\operatorname{vecl}(\mathbf{A}) \in \mathbb{R}^{m(m-1)/2}\).  
The operator $\operatorname{diag}$ has two distinct roles:  
(i) When applied to a vector $\mathbf{v} = (v_1, \ldots, v_m)^\top$, it produces a diagonal matrix: $\operatorname{diag}(\mathbf{v}) = \mathrm{diag}(v_1, \ldots, v_m)$.  
(ii) When applied to a square matrix $\mathbf{A}$, it extracts the diagonal elements as a vector: $\operatorname{diag}(\mathbf{A}) = (a_{11}, a_{22}, \ldots, a_{mm})^\top \in \mathbb{R}^m$.
For a symmetric positive definite matrix $\mathbf{A}$ with spectral decomposition $\mathbf{A} = \mathbf{Q}\, \mathrm{diag}(\lambda_1, \ldots, \lambda_m)\, \mathbf{Q}^\top$, where $\mathbf{Q}$ is orthogonal and $\lambda_j > 0$, the matrix exponential and logarithm are defined as
\(e^{\mathbf{A}} = \mathbf{Q}\, \mathrm{diag}(e^{\lambda_1}, \ldots, e^{\lambda_m})\, \mathbf{Q}^\top\) and \(\log \mathbf{A} = \mathbf{Q}\, \mathrm{diag}(\log \lambda_1, \ldots, \log \lambda_m)\, \mathbf{Q}^\top\).

The classical Fisher $z$-transformation,
\(
z = \tfrac{1}{2} \log \frac{1+\rho}{1-\rho},
\)
maps a single correlation $\rho \in (-1,1)$ onto the real line, removing its bounded constraint.  
In practice, however, correlations typically appear as entire matrices that must remain symmetric and positive definite. Extending Fisher’s idea therefore requires a transformation that moves the entire matrix into an unconstrained space, while preserving a valid inverse mapping.

A natural solution is the \emph{matrix logarithm}.  
For a correlation matrix $\bR \in \mathbb{R}^{m \times m}$, the \emph{generalized $z$-transformation} \citep[e.g.,][]{Archakov:2020} is
\begin{equation}\label{eq:gz-transform}
    \bgamma = f(\bR) := \operatorname{vecl}\!\big( \log \bR \big),
\end{equation}
where $\operatorname{vecl}$ stacks the strictly lower-triangular elements of $\log \bR$ into a vector.  
This defines a smooth one-to-one mapping between correlation matrices and an unconstrained Euclidean space, facilitating flexible regression modeling of complex dependence.

Two features make the generalized $z$-transformation particularly appealing for applied work.  
First, it is \emph{order-invariant}: if $\by = \bP \bx$ is a permutation of the variables, then 
\(
\log \bR_y = \bP \, \log \bR_x \, \bP^\top,
\)
so relabeling variables simply permutes the transformed parameters without affecting their values.  
Second, the mapping preserves monotonicity: for any pairwise correlation $\rho_{jk}$ and its transformed counterpart $\gamma_{jk}$,
\(
\frac{\partial \rho_{jk}}{\partial \gamma_{jk}} \ge 0,
\)
ensuring that increases in $\gamma_{jk}$ correspond to increases in $\rho_{jk}$, which aids interpretability \citep[see also][]{hu2024applied}.

These properties ensure both mathematical rigor and practical interpretability, making the generalized $z$-transformation a reliable tool for modeling correlation matrices while preserving positive definiteness.  
By mapping correlations into an unconstrained Euclidean space, it enables regression-style modeling of complex dependence, allowing covariates such as time, location, or group membership to directly inform the correlation structure.  
Building on this, we introduce a flexible parametrization for general correlation matrices.

For each cluster $i$, let the generalized $z$-transformation of its correlation matrix be
\[
\bgamma_i = \operatorname{vecl}\!\big( \log \bR_i \big) 
= \big( \gamma_{ijk} \big), 
\quad i = 1, \dots, n,\; 1 \le k < j \le m_i,
\]  
where $\bR_i = (\rho_{ijk})$ is the within-cluster correlation matrix and $\rho_{ijk} = \operatorname{corr}(y_{ij}, y_{ik})$ denotes the correlation between responses $y_{ij}$ and $y_{ik}$ in the same cluster.

We model each transformed pairwise correlation $\gamma_{ijk}$ as
\begin{equation}\label{eq:corr}
    \gamma_{ijk} = \bw_{ijk}^\top \balpha,
\end{equation}
where $\balpha \in \mathbb{R}^d$ is the \emph{matrix log-correlation parameter} and $\bw_{ijk}$ contains pair-specific covariates capturing factors that explain the correlation between observations $j$ and $k$. 
The covariates $\bw_{ijk}$ are flexible and may include: (i) differences or distances between $\bx_{ij}$ and $\bx_{ik}$; (ii) subgroup indicators (e.g., same higher-level cluster); (iii) other contextual or structural features relevant to the study.
Concrete constructions are illustrated in Section~\ref{sec:real_1} and ~\ref{sec:real_2}.

This formulation offers several advantages. Working with $\bgamma_i = \operatorname{vecl}\!\big(\log \bR_i\big)$ avoids the positive-definiteness constraint on $\bR_i$, maintaining flexibility and interpretability.  
The covariates $\bw_{ijk}$ allow incorporation of rich information---such as spatial/temporal proximity, shared group characteristics, or baseline similarity---directly into the correlation structure.  
Moreover, varying cluster sizes $m_i$ are naturally accommodated, making the model suitable for unbalanced longitudinal or multi-site data.

A key theoretical property supporting this approach is that any symmetric matrix $\bG \in \mathbb{R}^{m \times m}$ can be mapped back to a valid correlation matrix by appropriately adjusting its diagonal entries.  
Specifically, there exists a unique vector $\mathbf{x}^* \in \mathbb{R}^m$ such that
\[
\bR = e^{\,\bG[\mathbf{x}^*]}
\]  
is a correlation matrix, where $\bG[\mathbf{x}^*]$ denotes $\bG$ with $\mathbf{x}^*$ replacing its diagonal.  

The vector $\mathbf{x}^*$ can be computed via a fixed-point iteration \citep{Archakov:2020}:
\[
\mathbf{x}_{(k+1)} \;=\; \mathbf{x}_{(k)} \;-\; \log \!\Big[ \operatorname{diag}\!\Big(e^{\bG[\mathbf{x}_{(k)}]}\Big) \Big],
\]  
initialized from an arbitrary $\mathbf{x}_{(0)}$.  
This provides a constructive algorithm to recover a valid correlation matrix $\bR = f^{-1}(\bgamma)$ from an unconstrained vector $\bgamma$: (i) embed $\bgamma$ into a symmetric matrix $\bG = \operatorname{vecl}^{-1}(\bgamma)$ with arbitrary diagonal entries; (ii) compute the unique diagonal adjustment $\mathbf{x}^*$ using the fixed-point iteration; (iii) obtain the correlation matrix as $\bR = e^{\bG[\mathbf{x}^*]}$.  

We now turn to estimation of the correlation parameter $\balpha_0$.  
Let $\ell_i(\by_i \mid \balpha, \bbeta, \phi)$ denote the density of the multivariate Gaussian distribution $\mathrm{N}(\bmu_i, \bV_i)$, where
\[
\bV_i = \bA_i^{1/2} \bR_i(\balpha) \bA_i^{1/2}.
\]
Given current estimates $(\hat{\bbeta}_n, \hat{\phi}_n)$ from the marginal model, the pseudo-likelihood estimator of $\balpha$ maximizes the Gaussian pseudo-log-likelihood
\begin{equation}\label{eq:p-likelihood}
\sum_{i=1}^n \log \ell_i(\by_i \mid \balpha, \hat{\bbeta}_n, \hat{\phi}_n)
= -\frac{1}{2} \sum_{i=1}^n 
\Big\{
\log |\hat{\bV}_i| + (\by_i - \hat{\bmu}_i)^\top \hat{\bV}_i^{-1} (\by_i - \hat{\bmu}_i)
\Big\} + C,
\end{equation}
where $C$ is a constant independent of $\balpha$ and 
$\hat{\bV}_i = \hat{\bA}_i^{1/2} \bR_i(\balpha) \hat{\bA}_i^{1/2}$.
Defining the standardized residual,
\(
\hat{\bnu}_i = \hat{\bA}_i^{-1/2} (\by_i - \hat{\bmu}_i),
\)
the pseudo-likelihood estimator of $\balpha_0$ can be written as
\[
\hat{\balpha}_n = 
\arg\min_{\balpha} \sum_{i=1}^n 
\Big\{
\log |\bR_i(\balpha)| + \hat{\bnu}_i^\top \bR_i(\balpha)^{-1} \hat{\bnu}_i
\Big\}.
\]

This approach offers a practical, numerically stable way to estimate correlation parameters in the generalized correlation regression framework.  
By working in the unconstrained space of the generalized $z$-transformation, it ensures valid correlation matrices while flexibly incorporating covariates. Under standard regularity conditions, the estimator $\hat{\balpha}_n$ is consistent for $\balpha_0$ and asymptotically normal, providing a sound basis for statistical inference; see the Supplementary Material for technical details. An iterative procedure based on a modified Fisher scoring algorithm can be used to jointly estimate $(\balpha, \bbeta, \phi)$, which is also provided in the Supplementary Material.

\section{Analysis of Modern Prenatal Care Data}\label{sec:real_1}

\subsection{Dataset Overview}

We illustrate the practical utility of the proposed generalized correlation regression (GCR) framework using the motivating modern prenatal care dataset described in Section~\ref{sec:introduction}.  
The dataset comprises \emph{2,449 births} over a five-year period to \emph{1,558 mothers} across \emph{161 communities}, resulting in a naturally nested structure: births within mothers and mothers within communities.  
This example exhibits unbalanced cluster sizes, irregular timing between repeated births, and multiple levels of dependence---features that our methodology is specifically designed to handle.

For clarity of exposition, we performed standard data cleaning and preprocessing: records with missing values were removed, and selected variables were appropriately transformed.  
The final analytic sample includes \emph{2,223 births} from \emph{1,414 mothers} across \emph{160 communities}.  
Table~\ref{table:data_desc} summarizes the transformed and derived variables used in the analysis, along with concise descriptions of their definitions and coding schemes.

\begin{table}[!h]
    \centering
    \resizebox*{1\textwidth}{!}{
        \begin{tabular}{ccc}
            \toprule
            \textbf{Type} & \textbf{Variable}  & \textbf{Description} \\
            \midrule
            \multirow{3}{*}{Cluster Identifier} & kid & Unique identifier for each birth record \\
            & mom & Unique identifier for mother/family unit \\
            & cluster & Unique identifier for community \\
            \midrule
            Response Variable & prenat & Type of prenatal care: $0=\text{traditional}, 1 = \text{modern}$ \\
            \midrule
            \multirow{3}{*}{Individual Covariates} & childAge & Child's age group at survey time: $0=\text{0-2}, 1=\text{3+}$ \\
            & motherAge & Maternal age group at survey time: $0=\text{24-},1=\text{25+}$\\
            & birthOrd & Birth order within the family: $0=\text{1},1=\text{2-3},2=\text{4-6},3=\text{7+}$\\
            \midrule
            \multirow{6}{*}{Family Covariates} & indig & \makecell{Ethnic-linguistic identity:\\ $0 = \text{Ladino}, 1=\text{Indigenous non-Spanish}, 2=\text{Indigenous Spanish}$} \\
            & momEd & Mother's education level: $0=\text{None},1=\text{Primary},2=\text{Secondary+}$\\
            & husEd & Husband's education level: $0=\text{None},1=\text{Primary},2=\text{Secondary+}$ \\
            & husEmpl & \makecell{Husband's employment status:\\ $0=\text{Unskilled},1=\text{Professional},2=\text{Agri-self},3=\text{Agri-empl},4=\text{Skilled}$} \\
            & toilet & Presence of modern toilet in household: $0=\text{None},1=\text{Modern}$ \\
            & TV & TV ownership and viewing frequency: $0=\text{None},1=\text{not daily},2=\text{daily}$ \\
            \midrule
            \multirow{2}{*}{Community Covariates} & pcInd81 & Percentage of indigenous population in community: $[0.006683,0.9958837]$ \\
            & ssDist & Scaled distance to nearest health clinic: $[0,1]$ \\
            \bottomrule
        \end{tabular}
    }
    \caption{Description of Variables of the Prenatal Care Dataset.}
    \label{table:data_desc}
\end{table}

The analysis of this dataset addresses two complementary aims.  
From a scientific perspective, we ask: How is prenatal care utilization---considering timing, frequency, and adequacy---associated with key maternal and neonatal outcomes (e.g., preterm birth or birthweight proxies) after accounting for maternal history and community context? Do these associations vary across mothers or communities, indicating heterogeneity in access or effectiveness?  

From an applied and methodological perspective, we ask: How much correlation arises within mothers and within communities, and how does explicitly modeling this dependence affect effect estimates and their uncertainty compared with conventional working-correlation choices? Which covariates---such as maternal characteristics, inter-birth spacing, or community indicators---help explain the strength of dependence, and does allowing correlation to vary with these features improve model fit and inference?

Let \(y_{ij}\) denote the prenatal care status for birth \(j\) in community \(i\), where \(y_{ij} = 1\) indicates modern care and \(y_{ij} = 0\) indicates traditional care. Let \(p_{ij} = \mathbb{E}(y_{ij}) = \Pr(y_{ij} = 1)\). Following \cite{Pebley:1996}, we specify the mean model as
\begin{align}\label{eq:mean}
\text{logit}(p_{ij}) 
=\, & \beta_0
+ \beta_1\,\text{childAge}_{ij}
+ \beta_2\,\text{motherAge}_{ij}
+ \sum_{k=1}^3 \beta_{2+k}\,\mathbb{I}\big(\text{birthOrd}_{ij} = k\big) \notag\\
&+ \sum_{k=1}^2 \beta_{5+k}\,\mathbb{I}\big(\text{indig}_{ij} = k\big)
+ \sum_{k=1}^2 \beta_{7+k}\,\mathbb{I}\big(\text{momEd}_{ij} = k\big)
+ \sum_{k=1}^2 \beta_{9+k}\,\mathbb{I}\big(\text{husEd}_{ij} = k\big) \notag\\
&+ \sum_{k=1}^4 \beta_{11+k}\,\mathbb{I}\big(\text{husEmpl}_{ij} = k\big)
+ \beta_{16}\,\text{toilet}_{ij}
+ \sum_{k=1}^2 \beta_{16+k}\,\mathbb{I}\big(\text{TV}_{ij} = k\big) \notag\\
&+ \beta_{19}\,\text{pcInd81}_{ij}
+ \beta_{20}\,\text{ssDist}_{ij},
\end{align}
where \(\mathbb{I}(\cdot)\) denotes the indicator function. For each categorical covariate, the baseline category (coded as \(0\)) is omitted to ensure model identifiability.

\subsection{Analysis by Our Correlation Models}

We begin by introducing the standardized residuals:
\begin{equation*}
    \hat{\bvar}_i=(\hat{\varepsilon}_{i1},\dots,\hat{\varepsilon}_{i,m_i})^\top = \hat{\bV}_i^{-1/2}(y_{i1} - \hat{p}_{i1},\dots,y_{i,m_i} - \hat{p}_{i,m_i})^\top,
\end{equation*}
where $\hat{\bV}_i$ is the estimated covariance matrix defined in \eqref{eq:vcov}.
The empirical correlation for a predefined subgroup \(S\) is calculated as:
\begin{equation}\label{eq:e_corr}
    \hat{\rho}_{S} = \frac{1}{N_S} \sum_{(i,j) \neq (t,s)} 
    \hat{\varepsilon}_{ij} \, \hat{\varepsilon}_{ts} \,
    \mathbb{I}\big( (y_{ij}, y_{ts}) \in S \big),
\end{equation}
where \(i,t\) index communities, \(j,s\) index records, and \(N_S\) is the number of observation pairs in subgroup \(S\). We perform \textit{t}-tests on each subgroup to examine the significance of the empirical correlations.

Under a correctly specified model, these empirical correlations are expected to be close to zero. This follows from the fact that $\Var(\hat{\boldsymbol{\varepsilon}}_i) = \hat{\bV}_i^{-1/2} \bV_i \hat{\bV}_i^{-1/2} \approx \bI$ and the assumption of independence between clusters, implying $\mathbb{E}(\hat{\varepsilon}_{ij} \hat{\varepsilon}_{ts}) \approx 0$ for $(i,j) \neq (t,s)$. Hence, any statistically significant deviations from zero indicate a lack of fit. 
Note that when an independent working correlation structure is used, $\hat{\bV}_i$ reduces to a diagonal matrix of marginal variances. In this case, $\hat{\rho}_{S}$ can be directly interpreted as an estimate of the correlation of the outcome $y_{ij}$ within subgroup \(S\). For other working correlation structures, however, $\hat{\rho}_{S}$ primarily serves as a tool for model diagnostics and variable selection.

A primary source of dependence arises from shared community- and family-level influences, expected to induce correlations among prenatal care choices. To visualize these baseline clustering effects, we define four subgroups and calculate the empirical correlation \eqref{eq:e_corr} fitted by the independent GEE.
At the community level, let \(S_1\) denote pairs within the same community (\(i=t\) and \((i,j) \neq (t,s)\)) and \(S_2\) its complement.  
At the family level, \(S_3\) (within-family) and \(S_4\) (between-family) are defined analogously.
The resulting empirical correlations are shown in the upper panel of Table~\ref{table:e_corr_baseline}, all results except $\rho_{S_4}$ are significant.
\begin{table}[!h]
    \centering
    \begin{tabular}{ccccc}
        \toprule
        \textbf{Estimate} & {Within-community} & {Between-community} & {Within-family} & {Between-family} \\
        \midrule
        \textbf{E.Corr.} & $0.1455$ & $-0.0015$ & $0.7764$ & $-0.0007$ \\
        \textbf{$p$-Value} & $0.0000$ & $0.0178$ & $0.0000$ & $0.2934$ \\
        \midrule
        \textbf{E.Corr.} & $0.0000$ & $-0.0004$ & $0.0000$ & $-0.0004$ \\
        \textbf{$p$-Value} & $0.6527$ & $0.5006$ & $0.6351$ & $0.5206$\\
        \bottomrule
    \end{tabular}
    
  \caption{Empirical correlations for community and family clustering. The upper panel shows results for independent GEE; the lower panel shows results for \eqref{eq:corr_init}.}
  \label{table:e_corr_baseline}
\end{table}

Figure~\ref{fig:boxplot} presents boxplots of the empirical correlations for the four subgroups, illustrating the presence of strong within-family correlation and moderate within-community correlation. These results indicate that correlations are stronger within communities and within families, with family-level clustering effects notably more pronounced than community-level effects, which are failed to be captured by the independent GEE.

\begin{figure}[!h]
    \centering
    \includegraphics[width=0.8\textwidth]{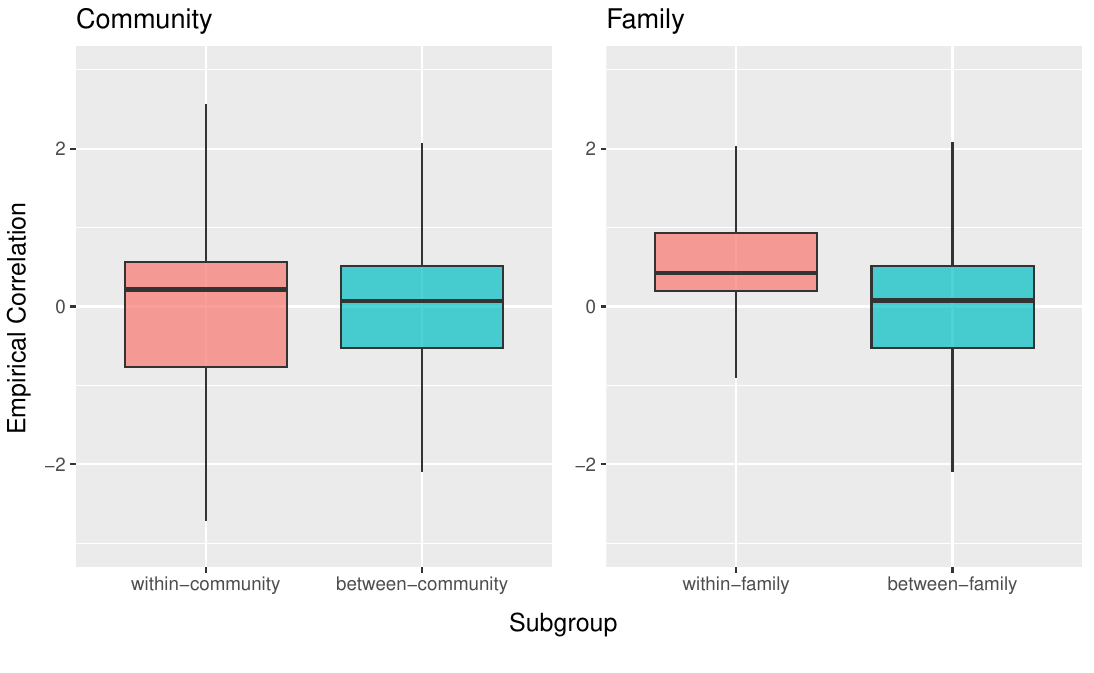}
    \caption{Boxplots of the empirical correlations at the community and family level.}
    \label{fig:boxplot}
\end{figure}  

Motivated by this observation, we fit a simple initial correlation model
\begin{equation}\label{eq:corr_init}
    \gamma_{ijk} = \alpha_0 + \alpha_1 w_{ijk,1},
\end{equation}
where \(w_{ijk,1}=1\) if births $j$ and $k$ within community $i$ belong to the same family, and \(0\) otherwise.
Under this hierarchical structure, the vector \(\bgamma_i\) exhibits a block pattern:  
\[
\bgamma_i = \vecl\!\left( 
\begin{array}{cc:ccc}
* & \alpha_0 + \alpha_1 & \alpha_0 & \alpha_0 & \alpha_0 \\
\alpha_0 + \alpha_1 & * & \alpha_0 & \alpha_0 & \alpha_0 \\
\hdashline
\alpha_0 & \alpha_0 & * & \alpha_0 + \alpha_1 & \alpha_0 + \alpha_1 \\
\alpha_0 & \alpha_0 & \alpha_0 + \alpha_1 & * & \alpha_0 + \alpha_1 \\
\alpha_0 & \alpha_0 & \alpha_0 + \alpha_1 & \alpha_0 + \alpha_1 & * \\
\end{array} 
\right),
\]  
so that \(\alpha_0\) captures the baseline within-community correlation, while \(\alpha_1\) represents the additional within-family correlation. Consequently, the resulting correlation matrix \(\bR_i\) inherits the same block structure. 

The estimated parameters are $\hat{\alpha}_0 = 0.0468$ and $\hat{\alpha}_1 = 0.8617$, both highly significant (p-values $< 0.0001$).  
These results are consistent with the empirical residual correlations and prior findings \citep{Pebley:1996,Rodriguez:2001}, confirming strong family-level and moderate community-level clustering effects in the data.
Again, we examine the empirical correlations \eqref{eq:e_corr} for the four community- and family-level subgroups fitted by our model \eqref{eq:corr_init}, which are shown in the lower panel of Table~\ref{table:e_corr_baseline}. It can be seen that all empirical correlations are now insignificant with values approaching zero, suggesting that our proposed model \eqref{eq:corr_init} successfully captures the two baseline clustering effects.

Beyond the baseline community- and family-level clustering, other features may induce additional correlations. For example, the ethnic-linguistic identifier \textit{indig} links multiple birth records, serving as a natural ethnic-linguistic grouping variable. 

We begin the exploration by computing the empirical correlations \eqref{eq:e_corr} within subgroups defined by family- and individual-level covariates. For ethnic-linguistic identities, three subgroups are defined: a pair of births is considered to belong to the same subgroup only if they share both the same community and the same ethnic-linguistic identity, which corresponds to the three ethnic groups. A similar approach is applied to husband’s employment status. For other family- and individual-level covariates, subgroups are constructed based on absolute differences. For example, in the case of mother’s education level, three subgroups are defined according to whether the absolute difference in education between a birth pair is 0, 1, or 2, while again restricting to pairs within the same community. The empirical correlations are fitted via our initial correlation model \eqref{eq:corr_init}. 
While other covariates tend out to be insignificant, two covariates stand out---ethnic-linguistic identity and husband’s employment status, with results shown in the upper panel of Table \ref{table:e_corr_indig} and \ref{table:e_corr_husempl}. 

\begin{table}[!h]
    \centering
    \begin{tabular}{cccc}
        \toprule
        \textbf{Estimate} & {Ladino} & {Indigenous non-Spanish} & {Indigenous Spanish} \\
        \midrule
        \textbf{E.Corr.} & $0.0040$ & $0.0230$ & $-0.0362$ \\
        \textbf{$p$-Value} & $0.7146$ & $0.3428$ & $0.0817$ \\
        \midrule
        \textbf{E.Corr.} & $0.0029$ & $0.0131$ & $-0.0047$ \\
        \textbf{$p$-Value} & $0.7905$ & $0.5996$ & $0.8120$ \\
        \bottomrule
    \end{tabular}
    
  \caption{Empirical correlations for ethnic-linguistic identity. The upper panel shows results for \eqref{eq:corr_init}; the lower panel shows results for \eqref{eq:corr_family}.}
  \label{table:e_corr_indig}
\end{table}

\begin{table}[!h]
    \centering
    \begin{tabular}{ccccccc}
        \toprule
        \textbf{Estimate} & {Unskilled} & {Professional} & {Agri-self} & {Agri-empl} & {Skilled} \\
        \midrule
        \textbf{E.Corr.} & $0.1002$ & $-0.0168$ & $0.0509$ & $0.0057$ & $0.0066$ \\
        \textbf{$p$-Value} & $0.2947$ & $0.7225$ & $0.0131$ & $0.8292$ & $0.8045$ \\
        \midrule
        \textbf{E.Corr.} & $-0.1372$ & $-0.0409$ & $0.0209$ & $0.0115$ & $-0.0266$ \\
        \textbf{$p$-Value} & $0.4936$ & $0.4115$ & $0.3092$ & $0.6484$ & $0.3255$ \\
        \bottomrule
    \end{tabular}
    
  \caption{Empirical correlations for husband’s employment status. The upper panel shows results for \eqref{eq:corr_init}; the lower panel shows results for \eqref{eq:corr_family}.}
  \label{table:e_corr_husempl}
\end{table}

The empirical correlations are marginally significant (p $< 0.1$) for the indigenous Spanish subgroup and significant (p $< 0.05$) for the agri-self subgroup. This indicates that our initial correlation model \eqref{eq:corr_init} is insufficient to capture the additional correlation patterns beyond the simple two-level hierarchical clustering within these subgroups, thus demanding the need to extend our correlation model by incorporating the two covariates \textit{indig} (ethnic-linguistic identity) and \textit{husEmpl} (husband’s employment status).

To illustrate the flexibility and practical utility of our proposed approach, we incorporate the two covariates accordingly, leading to the following correlation model:
\begin{align}\label{eq:corr_family}
    \gamma_{ijk} = & \ \alpha_0 
    + \alpha_1\, w_{ijk,1} 
    + \sum_{l=0}^2 \alpha_{2+l}\, w_{ijk,2+l} 
    + \sum_{l=0}^4 \alpha_{5+l}\, w_{ijk,5+l},
\end{align}
where the indices \(i,j,k\) follow the same convention as in model~\eqref{eq:corr_init}. The additional correlation covariates are defined as:
\begin{itemize}
    \item \(w_{ijk,2+l} = \mathbb{I}(\text{indig}_{ij} = \text{indig}_{ik} = l)\), for \(l = 0,1,2\), indicating shared ethnic-linguistic identity;
    \item \(w_{ijk,5+l} = \mathbb{I}(\text{husEmpl}_{ij} = \text{husEmpl}_{ik} = l)\), for \(l = 0,1,2,3,4\), indicating shared employment status of the husband.
\end{itemize}
Beyond generic community- and family-level clustering, model~\eqref{eq:corr_family} allows correlation to vary with shared ethnic-linguistic background, socioeconomic similarity---factors relevant to prenatal care access and choices.
The parameter estimates and corresponding $p$-values for model~\eqref{eq:corr_family} are summarized in Table~\ref{table:corr_family}.

\begin{table}[!h]
    \centering
    \sisetup{
        table-format = -1.4,
        table-space-text-post = {***},
        table-align-text-post = false,
        table-number-alignment = center,
    }
    \begin{tabular}{@{} c 
        S[table-format=-1.4, table-column-width=6cm]
        S[table-format=1.4, table-column-width=3cm]
        @{}}
    \toprule
    \textbf{Variable} & \textbf{Estimate} & \textbf{P-value}\\
    \midrule
    Intercept ($\alpha_0$) & 0.0541* & 0.0459 \\
    mom ($\alpha_1$) & 0.8190*** & 0.0000 \\
    \hdashline
    Ladino ($\alpha_2$) & -0.0195 & 0.5512 \\
    Indigenous non-Spanish ($\alpha_3$) & -0.0141 & 0.7405 \\
    Indigenous Spanish ($\alpha_4$) & -0.0626\textbf{.} & 0.0684 \\
    \hdashline
    Unskilled ($\alpha_5$) & 0.3635 & 0.2859 \\
    Professional ($\alpha_6$) & 0.0761 & 0.7812 \\
    Agri-self ($\alpha_7$) & 0.0323\textbf{.} & 0.0881 \\
    Agri-empl ($\alpha_8$) & 0.0122 & 0.6439 \\
    Skilled ($\alpha_9$) & 0.0653 & 0.2667 \\
    \bottomrule
    \end{tabular}

    \medskip
    \flushleft{
        \small 
        Notes: \textbf{.} for $\text{p} < 0.1$, * for $\text{p} < 0.05$, ** for $\text{p} < 0.01$ and *** for $\text{p} < 0.001$.
    }
    
    \caption{Parameter estimates and corresponding p-values for \eqref{eq:corr_family}.}
    \label{table:corr_family}
\end{table}

The estimation results aligns well with the empirical studies above. In model~\eqref{eq:corr_family}, the hierarchical clustering effects \(\hat{\alpha}_0\) and \(\hat{\alpha}_1\) remain significant, while most parameters are insignificant, likely because the strong baseline clustering captures most of their potential variation.
Two exceptions are again \(\hat{\alpha}_4\) (indigenous Spanish ethnic-linguistic group) and \(\hat{\alpha}_7\) (husband agri-self employment), both marginally significant (p $<0.1$), indicating additional dependence beyond generic family- and community-level effects. Moreover, the empirical correlations \eqref{eq:e_corr} for the baseline clustering and all the covariate-based subgroups fitted via our model \eqref{eq:corr_family} are now no longer significant. Specifically, the empirical results for ethnic-linguistic identity and husband’s employment status are shown in the lower panel of Table \ref{table:e_corr_indig} and \ref{table:e_corr_husempl}. This suggests that our model \eqref{eq:corr_family} has already adequately capture all the dependencies associated with family and individual level covariates.

Even after accounting for general family and community clustering, births in the indigenous Spanish population are somewhat less correlated (negative estimated value), whereas mothers in agri-self households show greater similarity in prenatal care choices (positive estimated value).
The finding of weaker within-group correlation among the indigenous Spanish population may be attributed to their higher degree of cultural assimilation, which could promote individual behavioral diversity and reduce uniform group practices.
The stronger correlation observed within agriculturally self-employed households might be explained by occupational isolation. This isolation can foster socioeconomic homogeneity and shared experiences, thereby increasing behavioral consistency in healthcare decisions.
These patterns reflect cultural norms or occupation-related constraints, suggesting potential targets for health interventions or further investigation. 

\subsection{Comparisons with GLMM}

For computational efficiency, we consider a simplified version of model \eqref{eq:corr_family}:
\begin{align}\label{eq:corr_family_2}
    \gamma_{ijk} = & \ \alpha_0 
    + \alpha_1\, w_{ijk,1} 
    + \alpha_{2}\, w_{ijk,2} + \alpha_{3}\, w_{ijk,3} 
    + \alpha_{4}\, w_{ijk,4} + \alpha_{5}\, w_{ijk,5},
\end{align}
where the indices \(i,j,k\) and covariate $w_{ijk,1}$ are the same as in model~\eqref{eq:corr_init}. The additional correlation covariates are defined as:
\begin{itemize}
    \item \(w_{ijk,2} = \mathbb{I}(\text{indig}_{ij} = \text{indig}_{ik} \neq 2)\) and \(w_{ijk,3} = \mathbb{I}(\text{indig}_{ij} = \text{indig}_{ik} = 2)\), indicating shared ethnic-linguistic identity with the indigenous Spanish taken out separately;
    \item \(w_{ijk,4} = \mathbb{I}(\text{husEmpl}_{ij} = \text{husEmpl}_{ik} \neq 2)\) and \(w_{ijk,5} = \mathbb{I}(\text{husEmpl}_{ij} = \text{husEmpl}_{ik} = 2)\), indicating shared employment status of the husband with the agri-self taken out separately.
\end{itemize}
Fitting the simplified model \eqref{eq:corr_family_2} obtains similar results as in model \eqref{eq:corr_family}: $\alpha_0$ and $\alpha_1$ are significantly positive with $\alpha_1$ having the larger absolute value; $\alpha_3$ are $\alpha_5$ are marginally significant with the same signs as in model \eqref{eq:corr_family}, while $\alpha_2$ are $\alpha_4$ are insignificant.


A generalized linear mixed model (GLMM) for this setting can be expressed as
\begin{equation}\label{eq:glmm_covariate}
    \mathrm{logit}(p_{ijklt}) = \text{fixed effects} + u_i + v_{ij} + s_{ik} + c_{il}, 
\end{equation}
where $i$ indexes communities, $j$ indexes families, $k$ indexes ethnic-linguistic identities, $l$ indexes husband’s employment status, and $t$ indexes individual births.  
Here, $p_{ijklt}$ denotes the conditional probability of modern prenatal care;  
$u_i \sim \mathrm{N}(0, \sigma_u^2)$ captures community-level heterogeneity;  
$v_{ij} \sim \mathrm{N}(0, \sigma_v^2)$ captures family-level heterogeneity;
$s_{ik} \sim \mathrm{N}(0, \sigma_s^2)$ models variation of the ethnic-linguistic identity;
$c_{il} \sim \mathrm{N}(0, \sigma_c^2)$ models variation of the husband’s employment status.
This hierarchical formulation explicitly accommodates the two main sources of dependence and the additional covariate-based correlations, providing a natural and interpretable framework aligned with the substantive context of the study.

However, fitting the GLMM \eqref{eq:glmm_covariate} yields statistically insignificant estimates for $\sigma_s$ and $\sigma_c$, which goes against our former empirical findings and analyzing results, suggesting that the model could not discern the variability attributable to ethnic-linguistic identity or husband's employment status.
This apparent discrepancy with our initial empirical findings arises because the GLMM can only estimate a single variance component across all occupations (as a random effect), while our method directly parameterizes the correlation structure itself. This allows us to estimate distinct coefficients (e.g., $\alpha_5$ for agri-self in model \eqref{eq:corr_family_2}) that quantify how each specific occupation category independently modulates the strength of within-cluster dependence.
Consequently, our proposed method enhances interpretability of how specific covariates influence the correlation, and enables fine-grained analysis, which is often infeasible with conventional GEE and GLMM methods.

To substantiate the claim that including \textit{indig} and \textit{husEmpl} is necessary for capturing the correlation structure, we next assess the improvement in model fit. To empirically validate this improvement and rigorously benchmark our approach, we compare its predictive performance against our initial model \eqref{eq:corr_init}, the GLMM and GEE alternatives.
A reduced GLMM is also fitted for comparison with our initial model \eqref{eq:corr_init}:
\begin{equation}\label{eq:glmm_reduced}
\mathrm{logit}(p_{ijt}) = \text{fixed effects} + u_i + v_{ij},
\end{equation}
where the notation follows that of \eqref{eq:glmm_covariate}. Predicted probabilities for the GLMMs are obtained using the \texttt{predict} function. Additionally, we fit a GEE with an exchangeable working correlation structure. Both the GLMM and GEE models incorporate the same set of fixed effects as specified in \eqref{eq:mean}.

We implement a 5-fold cross-validation procedure repeated 15 times, and the
model performance is evaluated using the Brier Score \citep{Brier:1950} and the Log Loss, with \(\text{CV}_1\) and \(\text{CV}_2\) denoting the mean cross-validated scores for the Brier Score and Log Loss, respectively; see Supplementary Material for their definitions.  
For both metrics, smaller values indicate better predictive performance.
To assess relative performance, we perform pairwise $t$-tests comparing the mean CV score differences between our model~\eqref{eq:corr_family_2} and the alternative models.  
The corresponding $t$-statistics and $p$-values are reported in Table~\ref{table:data1_compare}.

\begin{table}[!h]
    \centering
    \begin{tabular}{ccccc}
        \toprule
        \textbf{Criteria} & \textbf{Model \eqref{eq:corr_init}} & \textbf{GEE (Exch)} & \textbf{GLMM \eqref{eq:glmm_covariate}} & \textbf{GLMM \eqref{eq:glmm_reduced}} \\
        \midrule
        $\textbf{CV}_1$ & 0.0001(-5.2233) & 0.0000(-9.2625) & 0.0000(-54.614) & 0.0000(-50.270) \\
        $\textbf{CV}_2$ & 0.0010(-4.1294) & 0.0000(-9.6811) & 0.0000(-52.560) & 0.0000(-48.618) \\
        \bottomrule
    \end{tabular}
    \caption{Model comparison for toenail infection data.}
    \label{table:data1_compare}
\end{table}

The results demonstrate that our proposed model~\eqref{eq:corr_family_2} achieves the lowest mean cross-validation scores with minimal variance across both evaluation metrics, indicating superior predictive performance over all alternative approaches.
In contrast, the GLMM yields substantially higher mean error scores accompanied by greater variability, suggesting not only reduced predictive accuracy but also higher sensitivity to variations in the data.
As summarized in Table~\ref{table:data1_compare}, all pairwise comparisons yield $p$-values below 0.01, providing strong statistical evidence that the performance advantage of our model is highly significant.

Furthermore, while our initial baseline model~\eqref{eq:corr_init} has already outperformed conventional methods, incorporating the covariates \textit{indig} and \textit{husEmpl} in model~\eqref{eq:corr_family_2} leads to a statistically significant improvement in predictive performance, underscoring the value of explicitly modeling covariate-dependent correlation structures.

\section{Additional Real Data Analysis}\label{sec:real_2}

We apply our proposed method to a longitudinal dataset with binary response variables from a randomized clinical trial investigating treatment options for onychomycosis (toenail infection) \citep{DeBacker1998}.  
The trial compared the efficacy of two antifungal therapies---250 mg daily terbinafine versus 200 mg daily itraconazole---under real-world patient adherence patterns.  
A total of 378 patients were randomized to one of the two treatment arms, of which 294 had at least one follow-up record.  

Patients were scheduled for seven follow-up visits at approximately weeks 0, 4, 8, 12, 24, 36, and 48 to monitor clinical improvement.  
At each visit, physicians assessed the infection severity on a binary scale: moderate/severe versus none/mild.  
However, due to variability in adherence and appointment scheduling, the actual visit times (recorded in months) were irregular, and not all patients completed all visits, yielding an unbalanced longitudinal design.   Summary details of the dataset, including the binary response variable, are provided in Table~\ref{table:data_2_desc}.

\begin{table}[!h]
    \centering
    \begin{tabular}{cc}
        \toprule
        \textbf{Variable}  & \textbf{Description} \\
        \midrule
        patientID & Unique identifier for each patient in the trial \\
        \underline{outcome} & Nail infection severity: $0=\text{none or mild}, 1 = \text{moderate or severe}$ \\
        treatment & Type of antifungal treatment: $0=\text{itraconazole},1=\text{terbinafine}$ \\
        time & The time in month when the visit actually took place: $[0,18.5]$ \\
        visit & Number of visit attended: $\{1,2,3,4,5,6,7\}$ \\
        \bottomrule
    \end{tabular}
    \caption{Description of variables of the toenail infection dataset.}
    \label{table:data_2_desc}
\end{table}

From a clinical perspective, this study raises important practical questions:  
(i) how treatment effects evolve over time under irregular visit patterns,  
(ii) whether patient-level or visit-level factors contribute to correlation in treatment responses, and  
(iii) how accounting for such correlation influences the accuracy of treatment effect estimation.  
Our method is particularly well suited for this setting, as it can flexibly model the correlation structure arising from repeated measures with missing and irregularly spaced visits, while providing interpretable inferences on the role of specific covariates.  

Before proceeding to the formal statistical analysis, we conduct an initial exploratory examination of the dataset.  
Figure~\ref{fig:init_explore} summarizes these findings: the left panel shows the distribution of infection status across the two treatment groups, while the right panel depicts the temporal evolution of infection rates over the study period.  
Overall, the itraconazole group consistently exhibits a higher proportion of moderate to severe infections, both in aggregate and at individual visit times.  
In both treatment arms, infection rates decline over time, indicating progressive improvement in patients’ conditions as treatment continues.

\begin{figure}[!h]
    \centering
    \includegraphics[width=0.8\textwidth]{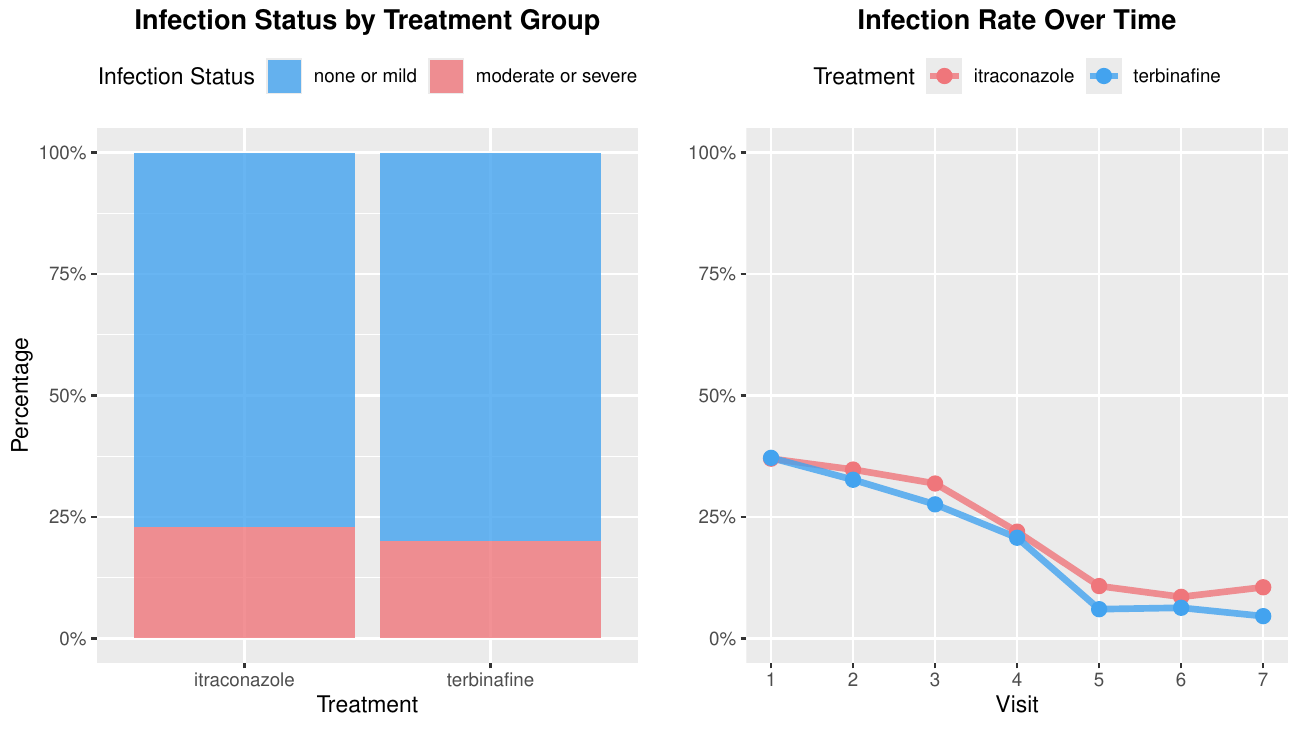}
    \caption{Initial graphical exploration of the toenail infection dataset}
    \label{fig:init_explore}
\end{figure}

Let \(y_{ij}\) denote the binary response indicating the infection severity status for the \(i\)th patient at the \(j\)th visit, and let \(p_{ij} = \mathbb{E}(y_{ij})\).  
We specify the mean model as
\begin{equation}\label{eq:data2_fixed}
    \mathrm{logit}(p_{ij}) = \beta_0 
    + \beta_1\,\text{treatment}_{ij} 
    + \beta_2\,\text{time}_{ij} 
    + \beta_3\,\big(\text{treatment}_{ij} \times \text{time}_{ij}\big),
\end{equation}
where \(\text{treatment}_{ij}\) is an indicator for treatment group, \(\text{time}_{ij}\) is the follow-up time, and the interaction term allows treatment effects to vary over time.

To examine how treatment time influences the correlation structure, we consider two alternative specifications based on different measures of time ordering---one using the discrete \textit{visit} index and the other using the continuous \textit{time} variable. 
Specifically, we fit the following two correlation models:
\begin{align}
    \label{eq:corr_2_visit} \boldsymbol{\gamma}_{ijk} = \alpha_0 + \alpha_1 w_{ijk,1} + \alpha_2 w_{ijk,2},\\
    \label{eq:corr_2_time} \boldsymbol{\gamma}_{ijk} = \tilde{\alpha}_0 + \tilde{\alpha}_1 \tilde{w}_{ijk,1} + \tilde{\alpha}_2 \tilde{w}_{ijk,2},
\end{align}
where \(i\) indexes patients and \(j,k\) index the \(j\)th and \(k\)th visits, respectively. The covariates are defined as:
\begin{itemize}
    \item \(w_{ijk,1} = \tilde{w}_{ijk,1} = \mathbb{I}(\text{treatment}_{i} = 1)\), indicating the baseline correlation difference between the two treatment groups;
    \item \(w_{ijk,2} = \log|\text{visit}_{ij} - \text{visit}_{ik}|\) and \(\tilde{w}_{ijk,2} = \log|\text{time}_{ij} - \text{time}_{ik}|\), representing the logarithm of the visit or time difference, respectively.
\end{itemize}
Parameter estimates for both models, along with their corresponding $p$-values, are reported in Table~\ref{table:data2_corr}.

\begin{table}[h]
    \centering
    \sisetup{
        table-format = -1.4,  
        table-space-text-post = {***}, 
        table-align-text-post = false, 
        table-number-alignment = center,
    }
    \begin{tabular}{@{} c 
        S[table-format=-1.4, table-column-width=2.5cm] 
        S[table-format=1.4, table-column-width=2.0cm]   
        S[table-format=-1.4, table-column-width=2.5cm] 
        S[table-format=1.4, table-column-width=2.0cm]   
        @{}}
    \toprule
    \multirow{2}{*}{\textbf{Variable}} & 
    \multicolumn{2}{c}{\textbf{Model \eqref{eq:corr_2_visit}}} & 
    \multicolumn{2}{c}{\textbf{Model \eqref{eq:corr_2_time}}} \\
    \cmidrule(lr){2-3} \cmidrule(lr){4-5}
    & \textbf{Estimate} & \textbf{P-value} & \textbf{Estimate} & \textbf{P-value} \\
    \midrule
    Intercept ($\beta_0$) & -0.5179** & 0.0013 & -0.4822** & 0.0042\\
    treatment ($\beta_1$) & 0.0313 & 0.8900 & 0.0292 & 0.9014 \\
    time ($\beta_2$) & -0.1530*** & 0.0000 & -0.1703*** & 0.0000\\
    $\text{treatment}\times\text{time}$ ($\beta_3$) & -0.0970* & 0.0361 & -0.0871* & 0.0263\\
    \hdashline
    Intercept ($\alpha_0/\tilde\alpha_0$) & 0.7091*** & 0.0000 & 0.7727*** & 0.0000\\
    treatment ($\alpha_1/\tilde\alpha_1$) & -0.0225 & 0.5678 & 0.0027 & 0.9414 \\
    visit/time ($\alpha_2/\tilde\alpha_2$) & -0.4549*** & 0.0000 & -0.3127*** & 0.0000\\
    \bottomrule
    \end{tabular}

    \medskip
    \flushleft{
        \small
        Notes: \textbf{.} for $\text{p} < 0.1$, * for $\text{p} < 0.05$, ** for $\text{p} < 0.01$ and *** for $\text{p} < 0.001$.
    }
    
    \caption{Parameter estimates and corresponding p-values for \eqref{eq:corr_2_visit} and \eqref{eq:corr_2_time}.}
    \label{table:data2_corr}
\end{table}

The analyses from both correlation models yield similar conclusions in this study.  
In the mean model, although the main treatment effect \(\beta_1\) is not statistically significant, its interaction with time (\(\beta_3\)) is significant, indicating that treatment with \textit{terbinafine} leads to a faster cure rate over time.  
This effect is observed in addition to the overall improvement in infection status over time for both treatments, as reflected by the significant positive effect of \(\beta_2\).

For the correlation models, the results confirm a positive correlation between repeated measurements from the same patient, as captured by \(\alpha_0\) or \(\tilde{\alpha}_0\). 
The small estimates of \(\alpha_1\) and \(\tilde{\alpha}_1\) indicate that receiving the same treatment does not introduce additional correlation beyond the time effect, suggesting that the primary source of correlation is the patient-specific effect---an intuitively reasonable finding in this clinical context. 
This correlation decreases as the time gap between measurements increases, as reflected by the negative estimates of \(\alpha_2\) or \(\tilde{\alpha}_2\).

In this relatively standard setting, conventional methods such as the GEE and GLMM are directly applicable.   
For the GEE, we consider two working correlation structures: exchangeable and AR(1).  
To ensure stable estimation of the AR(1)-GEE model in \texttt{gee}, patients with only a single recorded visit are excluded from the analysis.

For the GEE, we use the same mean model as in~\eqref{eq:data2_fixed}.  
For the GLMM, the fixed effects are identical to those specified on the right-hand side of~\eqref{eq:data2_fixed}, with three independent random effects included:  
\begin{equation*}
    \mathrm{logit}(p_{ijk}) = \text{fixed effects} + u_i + v_j + w_k, 
\end{equation*}
where \(p_{ijk} = \mathbb{E}(y_{ijk})\) is the expected response for the \(i\)th patient, \(j\)th treatment type, and \(k\)th visit.  
Here, \(u_i \sim \mathrm{N}(0, \sigma_u^2)\) captures patient-level heterogeneity, \(v_j \sim \mathrm{N}(0, \sigma_v^2)\) captures treatment-level variation, and \(w_k \sim \mathrm{N}(0, \sigma_w^2)\) models visit-specific effects. Unlike our proposed method, the standard GLMM framework lacks a mechanism to model correlations directly based on continuous variables, unless they are included to have random slopes, which drastically changes the question to be addressed. Thus, to account for temporal correlation, it is restricted to using the discrete variable \textit{visit}.

Fitting the models with both alternative approaches yields quantitatively similar results.  
We then compare their predictive performance against our proposed method.   
To assess predictive accuracy, we employ a 15-repeat stratified 5-fold cross-validation scheme.  
The dataset is first partitioned into two mutually exclusive subsets by treatment type, with random sampling conducted within each subset to preserve the original treatment distribution.
The same cross-validation performance evaluating metrics is used as in Section~\ref{sec:real_1}.
  
Pairwise $t$-tests comparing mean CV score differences between model~\eqref{eq:corr_2_time} and each of the other four models yield the $p$-values and $t$-statistics reported in Table~\ref{table:data2_compare}.  
With the sole exception of the $\text{CV}_1$ comparison against the GEE (Exch) model, all differences are statistically significant negative at the 0.05 level.  
From an applied perspective, these results mean that our approach not only predicts patient outcomes more accurately but also clarifies how the timing of repeated measurements shapes correlation patterns over the treatment period.  
This combination of predictive strength and interpretability provides actionable insights for clinical trial design and patient monitoring strategies, offering a clear advantage over competing methods in both practical utility and statistical performance---likely attributable to its superior adequacy in model fitting.

\begin{table}[!h]
    \centering
    \begin{tabular}{ccccc}
        \toprule
        \textbf{Criteria} & \textbf{Model \eqref{eq:corr_2_visit}} & \textbf{GEE (Exch)} & \textbf{GEE (AR-1)} & \textbf{GLMM} \\
        \midrule
        $\textbf{CV}_1$ & 0.0004(-4.6233) & 0.7984(-0.2603) & 0.0003(-4.7688) & 0.0000(-64.382) \\
        $\textbf{CV}_2$ & 0.0011(-4.1041) & 0.0346(-2.3408) & 0.0000(-6.5205) & 0.0000(-27.065) \\
        \bottomrule
    \end{tabular}
    \caption{Model comparison for toenail infection data.}
    \label{table:data2_compare}
\end{table}

We further illustrate the broad applicability of our framework with additional analysis of the motivating modern prenatal care dataset and analysis of a clustered health count dataset, together with extensive simulation studies. Across these settings, our method consistently outperforms competitors such as the GLMMs in both predictive accuracy and interpretability, demonstrating clear competitive advantages in practical applications. Complete details are provided in the Supplementary Material.

\section{Discussion}\label{sec:conclusion}

This paper introduces a unified modeling framework for analyzing clustered and longitudinal data with complex dependence structures.  
Unlike traditional approaches such as GEE, GLM, or GLMM, our method simultaneously estimates both the marginal mean and the correlation structure in a flexible yet interpretable manner.  
Through extensive simulation studies (details in the Supplementary Material), we demonstrated that the proposed approach consistently outperforms commonly used working correlation structures in terms of both mean and covariance estimation accuracy, particularly when the underlying correlation pattern is heterogeneous or covariate-dependent.  

We further validated the versatility of the framework through three real data analyses: a modern prenatal care dataset exhibiting multilevel clustering, a longitudinal binary dataset from a toenail infection clinical trial, and an additional clustered count dataset.  
Across these applications, our method not only achieved superior predictive performance (e.g., lower Brier Score and Log Loss) but also provided meaningful insights into how family-, community-, and individual-level covariates influence correlation patterns---insights that are often difficult to obtain from standard GEE or GLMM approaches.  

In summary, the proposed framework provides:  
\begin{enumerate}
    \item a principled way to capture multilevel and covariate-dependent correlations,  
    \item straightforward hypothesis testing for correlation parameters, and  
    \item enhanced interpretability of dependence structures alongside improved predictive accuracy.
\end{enumerate}

Future research directions include extending the framework to handle high-dimensional covariates, incorporating nonparametric correlation components for greater flexibility, and developing computationally scalable algorithms for even larger datasets.  
We believe that this approach opens new opportunities for more refined modeling of dependence in clustered and longitudinal data.

\section{Data Availability Statement}

The Modern Prenatal Care Data is openly available in the \texttt{mlmRev} R package at \url{https://CRAN.R-project.org/package=mlmRev}. The Toenail Infection Data is openly available in the \texttt{HSAUR3} R package at \url{https://CRAN.R-project.org/package=HSAUR3}. The Grouse Ticks Data is openly available in the \texttt{lme4} R package at \url{https://CRAN.R-project.org/package=lme4}.

\section{Appendices}

All technical details, as well as the additional real data analyses, are provided in the Supplementary Material.  
Extensive simulation results illustrating the performance of the proposed method under various scenarios are also included therein.

\bibliographystyle{apalike}
\bibliography{paperbib}

\begin{thebibliography}{}

\bibitem[Archakov and Hansen, 2021]{Archakov:2020}
Archakov, I. and Hansen, P.~R. (2021).
\newblock A new parametrization of correlation matrices.
\newblock {\em Econometrica}, 89(4):1699--1715.

\bibitem[Bates et~al., 2015]{Bates:2015}
Bates, D., M{\"a}chler, M., Bolker, B., and Walker, S. (2015).
\newblock Fitting linear mixed-effects models using {lme4}.
\newblock {\em Journal of Statistical Software}, 67(1):1--48.

\bibitem[Breslow and Clayton, 1993]{breslow1993approximate}
Breslow, N.~E. and Clayton, D.~G. (1993).
\newblock Approximate inference in generalized linear mixed models.
\newblock {\em Journal of the American Statistical Association}, 88(421):9--25.

\bibitem[Brier, 1950]{Brier:1950}
Brier, G.~W. (1950).
\newblock Verification of forecasts expressed in terms of probability.
\newblock {\em Monthly weather review}, 78(1):1--3.

\bibitem[Carey, 2024]{Carey:2024}
Carey, V.~J. (2024).
\newblock {\em gee: Generalized Estimation Equation Solver}.
\newblock R package version 4.13-29.

\bibitem[De~Backer et~al., 1998]{DeBacker1998}
De~Backer, M., De~Vroey, C., Lesaffre, E., Scheys, I., and De~Keyser, P. (1998).
\newblock Twelve weeks of continuous oral therapy for toenail onychomycosis caused by dermatophytes: a double-blind comparative trial of terbinafine 250 mg/day versus itraconazole 200 mg/day.
\newblock {\em Journal of the American Academy of Dermatology}, 38(5):S57--S63.

\bibitem[Dennis~Jr and Schnabel, 1996]{dennis:1996}
Dennis~Jr, J.~E. and Schnabel, R.~B. (1996).
\newblock {\em Numerical methods for unconstrained optimization and nonlinear equations}.
\newblock SIAM.

\bibitem[Elston et~al., 2001]{Elston:2001}
Elston, D.~A., Moss, R., Boulinier, T., Arrowsmith, C., and Lambin, X. (2001).
\newblock Analysis of aggregation, a worked example: numbers of ticks on red grouse chicks.
\newblock {\em Parasitology}, 122(5):563–569.

\bibitem[Emrich and Piedmonte, 1991]{Emrich:1991}
Emrich, L.~J. and Piedmonte, M.~R. (1991).
\newblock A method for generating high-dimensional multivariate binary variates.
\newblock {\em The American Statistician}, 45(4):302--304.

\bibitem[Fox and Dunson, 2015]{fox2015bayesian}
Fox, E.~B. and Dunson, D.~B. (2015).
\newblock Bayesian nonparametric covariance regression.
\newblock {\em The Journal of Machine Learning Research}, 16(1):2501--2542.

\bibitem[Gourieroux et~al., 1984]{Gourieroux:1984}
Gourieroux, C., Monfort, A., and Trognon, A. (1984).
\newblock Pseudo maximum likelihood methods: Theory.
\newblock {\em Econometrica}, 52(3):681--700.

\bibitem[Hoff and Niu, 2012]{hoff2012covariance}
Hoff, P.~D. and Niu, X. (2012).
\newblock A covariance regression model.
\newblock {\em Statistica Sinica}, 22(2):729--753.

\bibitem[Hu et~al., 2024]{hu2024applied}
Hu, J., Chen, Y., Leng, C., and Tang, C.~Y. (2024).
\newblock Applied regression analysis of correlations for correlated data.
\newblock {\em The Annals of Applied Statistics}, 18(1):184--198.

\bibitem[Krupskii et~al., 2025]{krupskii2025factor}
Krupskii, P., Nasri, B.~R., and R{\'e}millard, B.~N. (2025).
\newblock On factor copula-based mixed regression models.
\newblock {\em Electronic Journal of Statistics}, 19(1):1133--1173.

\bibitem[Laird and Ware, 1982]{laird1982random}
Laird, N.~M. and Ware, J.~H. (1982).
\newblock Random-effects models for longitudinal data.
\newblock {\em Biometrics}, 38(4):963--974.

\bibitem[Liang and Zeger, 1986]{liang1986longitudinal}
Liang, K.-Y. and Zeger, S.~L. (1986).
\newblock Longitudinal data analysis using generalized linear models.
\newblock {\em Biometrika}, 73(1):13--22.

\bibitem[Linton and McCrorie, 1995]{Linton:1995}
Linton, O. and McCrorie, J.~R. (1995).
\newblock Differentiation of an exponential matrix function.
\newblock {\em Econometric Theory}, 11(5):1182--1185.

\bibitem[McCulloch and Searle, 2004]{mcculloch2004generalized}
McCulloch, C.~E. and Searle, S.~R. (2004).
\newblock {\em Generalized, linear, and mixed models}.
\newblock John Wiley \& Sons.

\bibitem[McGillycuddy et~al., 2025]{McGillycuddy:2025}
McGillycuddy, M., Warton, D.~I., Popovic, G., and Bolker, B.~M. (2025).
\newblock Parsimoniously fitting large multivariate random effects in {glmmTMB}.
\newblock {\em Journal of Statistical Software}, 112(1):1--19.

\bibitem[Nelsen, 2006]{nelsen2006introduction}
Nelsen, R.~B. (2006).
\newblock {\em An introduction to copulas}.
\newblock Springer.

\bibitem[Nikoloulopoulos, 2013]{nikoloulopoulos2013copula}
Nikoloulopoulos, A.~K. (2013).
\newblock Copula-based models for multivariate discrete response data.
\newblock In {\em Copulae in Mathematical and Quantitative Finance: Proceedings of the Workshop Held in Cracow, 10-11 July 2012}, pages 231--249. Springer.

\bibitem[Panagiotelis et~al., 2012]{panagiotelis2012pair}
Panagiotelis, A., Czado, C., and Joe, H. (2012).
\newblock Pair copula constructions for multivariate discrete data.
\newblock {\em Journal of the American Statistical Association}, 107(499):1063--1072.

\bibitem[Pebley et~al., 1996]{Pebley:1996}
Pebley, A., Goldman, N., and Rodriguez, G. (1996).
\newblock Prenatal and delivery care and childhood immunization in guatemala: Do family and community matter?
\newblock {\em Demography}, 33:231--47.

\bibitem[Peter, 2007]{peter2007correlated}
Peter, X.-K.~S. (2007).
\newblock {\em Correlated data analysis: modeling, analytics, and applications}.
\newblock Springer.

\bibitem[Pinheiro and Bates, 2009]{pinheiro2009mixed}
Pinheiro, J.~C. and Bates, D. (2009).
\newblock {\em Mixed-Effects Models in S and S-PLUS}.
\newblock Springer Science \& Business Media.

\bibitem[Pourahmadi, 1999]{pourahmadi1999joint}
Pourahmadi, M. (1999).
\newblock Joint mean-covariance models with applications to longitudinal data: Unconstrained parameterisation.
\newblock {\em Biometrika}, 86(3):677--690.

\bibitem[Pourahmadi, 2000]{pourahmadi2000maximum}
Pourahmadi, M. (2000).
\newblock Maximum likelihood estimation of generalised linear models for multivariate normal covariance matrix.
\newblock {\em Biometrika}, 87(2):425--435.

\bibitem[Rodríguez and Goldman, 2001]{Rodriguez:2001}
Rodríguez, G. and Goldman, N. (2001).
\newblock Improved estimation procedures for multilevel models with binary response: A case-study.
\newblock {\em Journal of the Royal Statistical Society. Series A (Statistics in Society)}, 164(2):339--355.

\bibitem[Self and Liang, 1987]{Self:1987}
Self, S.~G. and Liang, K.-Y. (1987).
\newblock Asymptotic properties of maximum likelihood estimators and likelihood ratio tests under nonstandard conditions.
\newblock {\em Journal of the American Statistical Association}, 82(398):605--610.

\bibitem[Stroup et~al., 2018]{stroup2018sas}
Stroup, W.~W., Milliken, G.~A., Claassen, E.~A., and Wolfinger, R.~D. (2018).
\newblock {\em SAS for mixed models: introduction and basic applications}.
\newblock SAS Institute.

\bibitem[Xie and Yang, 2003]{Xie:2003}
Xie, M. and Yang, Y. (2003).
\newblock {Asymptotics for generalized estimating equations with large cluster sizes}.
\newblock {\em The Annals of Statistics}, 31(1):310 -- 347.

\bibitem[Yahav and Shmueli, 2012]{Yahav:2012}
Yahav, I. and Shmueli, G. (2012).
\newblock On generating multivariate poisson data in management science applications.
\newblock {\em Applied Stochastic Models in Business and Industry}, 28(1):91--102.

\bibitem[Zhang et~al., 2015]{zhang2015joint}
Zhang, W., Leng, C., and Tang, C.~Y. (2015).
\newblock A joint modelling approach for longitudinal studies.
\newblock {\em Journal of the Royal Statistical Society Series B: Statistical Methodology}, 77(1):219--238.

\bibitem[Zou et~al., 2017]{zou2017covariance}
Zou, T., Lan, W., and Wang, H. (2017).
\newblock Covariance regression analysis.
\newblock {\em Journal of the American Statistical Association}, 112(518):266--281.

\end{thebibliography}

\newpage

\begin{center}
{\Large\bf Supplementary Material}
\end{center}

\begin{appendix}

\section{Algorithm}

\subsection{Modified Fisher’s Scoring Algorithm}\label{sec:algorithm}

To jointly estimate $(\balpha, \bbeta, \phi)$, we employ an iterative procedure based on a modified Fisher scoring algorithm that alternates between updates of $\bbeta$, $\balpha$, and $\phi$, combining the score functions from the GEE with the pseudo-likelihood.  
Define the score functions as
\begin{align*}
    \bS_1(\bbeta;\balpha,\phi) &= \sum_{i=1}^n \bD_i^\top \bV_i^{-1} (\by_i - \bmu_i), \\
    \bS_2(\balpha;\bbeta,\phi) &= \sum_{i=1}^n \bW_i^\top 
        \Big(\frac{\partial \brho_i}{\partial \bgamma_i}\Big)^\top 
        \vecl\Big(\bR_i^{-1} \hat{\bR}_i \bR_i^{-1} - \bR_i^{-1}\Big),
\end{align*}
where $\hat{\bR}_i = \bA_i^{-1/2} (\by_i - \bmu_i)(\by_i - \bmu_i)^\top \bA_i^{-1/2}$.  
The corresponding information matrices are
\[
    \bH_1(\balpha,\bbeta,\phi) = \sum_{i=1}^n \bD_i^\top \bV_i^{-1} \bD_i, \quad
    \bH_2(\balpha,\bbeta,\phi) = \sum_{i=1}^n \bW_i^\top 
        \Big(\frac{\partial \brho_i}{\partial \bgamma_i}\Big)^\top 
        \bJ_i 
        \Big(\frac{\partial \brho_i}{\partial \bgamma_i}\Big) \bW_i,
\]
with $\bJ_i = \boldeta_i \boldeta_i^\top$ and $\boldeta_i = \vecl\Big(\bR_i^{-1} \hat{\bR}_i \bR_i^{-1} - \bR_i^{-1}\Big)$.

A key step is evaluating the Jacobian $\frac{\partial \brho}{\partial \bgamma}$, which can be computed efficiently using the formula of \citet{Archakov:2020}:
\[
    \frac{\partial \brho}{\partial \bgamma} 
    = \bE_l \Big( \bI - \bA \bE_d^\top (\bE_d \bA \bE_d^\top)^{-1} \bE_d \Big) \bA (\bE_l + \bE_u)^\top,
\]
where $\bA = \frac{\partial \operatorname{vec}(\bR)}{\partial \operatorname{vec}(\bG)}$, and $\bE_l, \bE_u, \bE_d$ are elimination matrices satisfying
\[
\vecl(\bR) = \bE_l \operatorname{vec}(\bR), \quad
\vecl(\bR^\top) = \bE_u \operatorname{vec}(\bR), \quad
\operatorname{diag}(\bR) = \bE_d \operatorname{vec}(\bR).
\]

Let $\bG = \log \bR$ and consider its spectral decomposition $\bG = \bQ \bLambda \bQ^\top$, where $\bLambda = \operatorname{diag}(\lambda_1, \dots, \lambda_m)$ contains the eigenvalues and $\bQ$ is an orthonormal matrix of eigenvectors. Following \cite{Linton:1995}, the $m^2 \times m^2$ matrix $\bA$ is given by
\[
    \bA = (\bQ \otimes \bQ) \bXi (\bQ \otimes \bQ)^\top,
\]
where $\bXi$ is diagonal with entries
\[
    \bXi_{(j-1) m + k, (j-1) m + k} =
    \begin{cases}
        e^{\lambda_j}, & \text{if } \lambda_j = \lambda_k, \\
        \dfrac{e^{\lambda_j} - e^{\lambda_k}}{\lambda_j - \lambda_k}, & \text{if } \lambda_j \neq \lambda_k,
    \end{cases}
\]
for $j,k = 1,\dots,m$.  
This matrix $\bA$ is symmetric positive definite, ensuring numerical stability in the computation of the Jacobian.

The modified Fisher scoring updates are summarized in Algorithm~\ref{alg:fisher}.  
Because the pseudo-likelihood objective is not globally convex, we incorporate a step size $\lambda \in (0,1]$ to stabilize the iterations.  
To improve convergence in practice, we recommend using multiple initializations, such as marginal GLM estimates for $\bbeta$ and simple structures (e.g., independence or exchangeable) for $\balpha$.

\begin{algorithm}[!h]
    \renewcommand{\algorithmicrequire}{\textbf{Input:}}
    \renewcommand{\algorithmicensure}{\textbf{Output:}}
    \caption{Modified Fisher's Scoring Algorithm}
    \label{alg:fisher}
    \begin{algorithmic}[1]
        \REQUIRE Initial values \((\balpha^{(0)}, \bbeta^{(0)})\), iteration index \(k=1\)
        \ENSURE Estimates \((\balpha, \bbeta, \phi)\)
        \REPEAT
            \STATE Update \(\phi^{(k)}\) using \(\bbeta^{(k-1)}\)
            \STATE Set \(\balpha^{(k,0)} = \balpha^{(k-1)}\), sub-iteration index \(s=1\)
            \REPEAT
                \STATE Update 
                \[
                    \balpha^{(k,s)} = \balpha^{(k,s-1)} + \lambda [\bH_2(\balpha^{(k,s-1)}, \bbeta^{(k-1)}, \phi^{(k)})]^{-1} \bS_2(\balpha^{(k,s-1)}; \bbeta^{(k-1)}, \phi^{(k)})
                \]
                \vspace{-\baselineskip}
                \STATE Increment \(s \leftarrow s + 1\)
            \UNTIL convergence; update \(\balpha^{(k)}\)
            \STATE Update 
            \[
                \bbeta^{(k)} = \bbeta^{(k-1)} + [\bH_1(\balpha^{(k)}, \bbeta^{(k-1)}, \phi^{(k)})]^{-1} \bS_1(\bbeta^{(k-1)}; \balpha^{(k)}, \phi^{(k)})
            \]
            \vspace{-\baselineskip}
            \STATE Increment \(k \leftarrow k + 1\)
        \UNTIL convergence
    \end{algorithmic}
\end{algorithm}

High-dimensional $\boldeta_i$ can slow convergence. To mitigate this, we replace $\bJ_i$ with a pseudo-expectation $\tilde{\E}(\bJ_i)$ capturing the dominant first-order structure while ignoring higher-order dependencies (see Section~\ref{sec:pse-exp}), yielding stable estimates with negligible bias at lower cost.  

Per iteration, the main computational burden is forming and inverting $\bH_1$ and $\bH_2$ ($O(\max(p^3,d^3))$) and performing spectral decompositions of size $m_i$. Parallelization and efficient linear algebra improve scalability. Initializing with simple working correlations and using adaptive step sizes further enhance stability, with convergence monitored via parameter and objective changes.

\subsection{Calculation of the Pseudo-expectation}\label{sec:pse-exp}

We calculate the pseudo-expectation $\tilde{\mathbb{E}}(\bJ_i)$ defined in Section \ref{sec:algorithm}. We need the following lemma: 

\begin{lemma}\label{lem:calculate}
    Suppose $\tilde{\bvar}=(\tilde{\varepsilon}_1,\dots,\tilde{\varepsilon}_d)^\top\in\mathbb{R}^d$ with the correlation matrix $\bR$ and the diagonal marginal variance matrix $\bA$. Let $\bvar=(\varepsilon_1,\dots,\varepsilon_d)^\top=\bA^{-1/2}(\tilde{\bvar}-\E(\tilde\bvar))$.
    Then for any $d\times d$ matrix $\bB$, we have
    \begin{equation*}
        \E(\bvar\bvar^\top\bB\bvar\bvar^\top) = \bR\bB\bR+\bR\bB^\top\bR+\tr(\bB\bR)\bR+\bK(\bB),
    \end{equation*}
    where $[\bK(\bB)]_{ij}=\sum_{k=1}^d\sum_{l=1}^d\bB_{kl}\kappa_{ijkl}$ with $\kappa_{ijkl}=\E(\varepsilon_i\varepsilon_j\varepsilon_k\varepsilon_l)-  \bR_{ij}\bR_{kl}-\bR_{ik}\bR_{jl}-\bR_{il}\bR_{jk}$.
\end{lemma}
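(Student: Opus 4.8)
The plan is to compute $\E(\bvar\bvar^\top\bB\bvar\bvar^\top)$ entrywise and organize the resulting fourth moments of the standardized vector $\bvar$ into the four stated pieces. First I would write out the $(i,j)$ entry as
\[
\big[\E(\bvar\bvar^\top\bB\bvar\bvar^\top)\big]_{ij}
= \sum_{k=1}^d\sum_{l=1}^d \bB_{kl}\,\E(\varepsilon_i\varepsilon_k\varepsilon_l\varepsilon_j),
\]
so that everything reduces to knowing the fourth-order mixed moments $\E(\varepsilon_i\varepsilon_k\varepsilon_l\varepsilon_j)$. The key device is the classical decomposition of a fourth moment into its Gaussian part plus the joint fourth cumulant: for mean-zero, unit-variance variables with correlation matrix $\bR$,
\[
\E(\varepsilon_i\varepsilon_j\varepsilon_k\varepsilon_l)
= \bR_{ij}\bR_{kl} + \bR_{ik}\bR_{jl} + \bR_{il}\bR_{jk} + \kappa_{ijkl},
\]
which is precisely the definition of $\kappa_{ijkl}$ given in the statement. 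Note $\bvar$ has mean zero and, since $\bvar = \bA^{-1/2}(\tilde\bvar-\E\tilde\bvar)$ with $\bA$ the diagonal of marginal variances and $\bR$ the correlation matrix, unit variances and correlation matrix exactly $\bR$, so this decomposition applies directly.

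Next I would substitute this decomposition into the double sum and match terms. The $\kappa_{ijkl}$ contribution is by definition $[\bK(\bB)]_{ij} = \sum_{k,l}\bB_{kl}\kappa_{ijkl}$. For the three Gaussian-type products I would identify each as a matrix expression: $\sum_{k,l}\bB_{kl}\bR_{ik}\bR_{lj} = (\bR\bB\bR)_{ij}$; $\sum_{k,l}\bB_{kl}\bR_{il}\bR_{kj} = (\bR\bB^\top\bR)_{ij}$ (reindexing $k\leftrightarrow l$, which converts $\bB$ to $\bB^\top$); and $\sum_{k,l}\bB_{kl}\bR_{ij}\bR_{kl} = \bR_{ij}\sum_{k,l}\bR_{kl}\bB_{kl} = \bR_{ij}\tr(\bB\bR) = (\tr(\bB\bR)\bR)_{ij}$, using that $\sum_{k,l}\bB_{kl}\bR_{kl} = \tr(\bB\bR^\top) = \tr(\bB\bR)$ by symmetry of $\bR$. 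Collecting the four pieces gives the claimed identity; a brief remark that $\tr(\bB\bR)=\tr(\bR\bB)$ may be worth including for clarity.

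The only subtlety — hardly an obstacle — is bookkeeping: keeping the index pairing straight so that the correct one of the three Wick pairings lands on $\bR\bB\bR$ versus $\bR\bB^\top\bR$, and confirming the trace term is orientation-free. Since $\bB$ is an arbitrary (not necessarily symmetric) matrix, one must be careful that exactly one of the two ``sandwich'' terms picks up a transpose; the pairing $(i,j)(k,l)$ yields the trace term, the pairing $(i,k)(j,l)$ yields $\bR\bB\bR$, and the pairing $(i,l)(j,k)$ yields $\bR\bB^\top\bR$. No regularity conditions beyond finiteness of fourth moments are needed, and the proof is purely algebraic once the cumulant decomposition is invoked.
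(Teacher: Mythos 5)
Your proposal is correct and follows essentially the same route as the paper's proof: expand the $(i,j)$ entry as $\sum_{k,l}\bB_{kl}\,\E(\varepsilon_i\varepsilon_j\varepsilon_k\varepsilon_l)$, invoke the decomposition of the fourth moment into the three pairwise products plus $\kappa_{ijkl}$ (which is just the definition of $\kappa_{ijkl}$), and identify the three Wick-type sums with $\tr(\bB\bR)\bR$, $\bR\bB\bR$, and $\bR\bB^\top\bR$ using the symmetry of $\bR$. Your bookkeeping of which pairing produces the transpose matches the paper exactly, so no gap remains.
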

\begin{proof}
    For $i=1,\dots,d$, we have
    \begin{equation*}
        \varepsilon_i=\frac{\tilde{\varepsilon}_i-\E(\tilde{\varepsilon}_i)}{\sqrt{\Var(\tilde{\varepsilon}_i)}},
    \end{equation*}
    with $\E(\varepsilon_i)=0,\E(\varepsilon_i^2)=1$.
    Note that
    \begin{equation*}
        \bvar^\top\bB\bvar=\sum\limits_{k=1}^d\sum\limits_{l=1}^d \bB_{kl}\varepsilon_k\varepsilon_l.
    \end{equation*}
    Thus, the $(i,j)$th element of $\E(\bvar\bvar^\top\bB\bvar\bvar^\top)$ is given by
    \begin{equation}\label{eq:rewrite}
        \E\left(\varepsilon_i(\sum\limits_{k=1}^d\sum\limits_{l=1}^d \bB_{kl}\varepsilon_k\varepsilon_l)\varepsilon_j\right)=\sum\limits_{k=1}^d\sum\limits_{l=1}^d\bB_{kl}\E(\varepsilon_i\varepsilon_j\varepsilon_k\varepsilon_l).
    \end{equation}
    The term $\E(\varepsilon_i\varepsilon_j\varepsilon_k\varepsilon_l)$ can be rewritten as 
    \begin{align*}
        \E(\varepsilon_i\varepsilon_j\varepsilon_k\varepsilon_l)= & \  \E(\varepsilon_i\varepsilon_j)\E(\varepsilon_k\varepsilon_l)+\E(\varepsilon_i\varepsilon_k)\E(\varepsilon_j\varepsilon_l)+\E(\varepsilon_i\varepsilon_l)\E(\varepsilon_j\varepsilon_k)+\kappa_{ijkl}\\
        = &\ \bR_{ij}\bR_{kl}+\bR_{ik}\bR_{jl}+\bR_{il}\bR_{jk}+\kappa_{ijkl},
    \end{align*}
    where $\kappa_{ijkl}$ is the fourth-order cumulant.
    Then, we can rewrite the terms of \eqref{eq:rewrite} one by one:
    \begin{align*}
        \text{First term:} & \quad 
        \sum\limits_{k=1}^d\sum\limits_{l=1}^d \bB_{kl}\bR_{ij}\bR_{kl}=\bR_{ij}\sum\limits_{k=1}^d\sum\limits_{l=1}^d \bB_{kl}\bR_{kl}=\bR_{ij}\tr(\bB\bR), \\
        \text{Second term:} & \quad \sum\limits_{k=1}^d\sum\limits_{l=1}^d\bB_{kl}\bR_{ik}\bR_{jl}=\sum\limits_{k=1}^d\sum\limits_{l=1}^d\bR_{ik}\bB_{kl}\bR_{lj}=[\bR\bB\bR]_{ij}, \\
        \text{Third term:} & \quad \sum\limits_{k=1}^d\sum\limits_{l=1}^d\bB_{kl}\bR_{il}\bR_{jk}=\sum\limits_{k=1}^d\sum\limits_{l=1}^d\bR_{il}\bB_{kl}\bR_{kj}=[\bR\bB^\top\bR]_{ij}, \\
        \text{Fourth term:} & \quad 
        \sum\limits_{k=1}^d\sum\limits_{l=1}^d\bB_{kl}\kappa_{ijkl}=[\bK(\bB)]_{ij}.
    \end{align*}
    Thus, we have the matrix form
    \begin{equation*}
        \E(\bvar\bvar^\top\bB\bvar\bvar^\top) = \bR\bB\bR+\bR\bB^\top\bR+\tr(\bB\bR)\bR+\bK(\bB),
    \end{equation*}
    which is the desired result.
\end{proof}

Then, we are prepared for the calculation.

\begin{proof}[Calculation of $\tilde{\mathbb{E}}(\bJ_i)$]
    Let $\boldeta_i=(\eta_{ijk}),\ 1\leqslant k<j\leqslant m_i$.
    Note that
    \begin{equation*}
        \E[(\bR_i^{-1}\hat{\bR}_i\bR_i^{-1})_{jk}]=[\E(\bR_i^{-1}\hat{\bR}_i\bR_i^{-1})]_{jk}=(\bR_i^{-1})_{jk}.
    \end{equation*}
    Let $a_{ijk}$ be the $(j,k)$th element of $\bR_i^{-1}$. Then, the $(\frac{(2m_i-k)(k-1)}{2}+j-k,\frac{(2m_i-s)(s-1)}{2}+l-s)$th element of $\bJ_i$ is given by
    \begin{equation*}
        \E(\eta_{ijk}\eta_{ils})=\E[(\bR_i^{-1}\hat{\bR}_i\bR_i^{-1})_{jk}(\bR_i^{-1}\hat{\bR}_i\bR_i^{-1})_{ls}]-a_{ijk}a_{ils},
    \end{equation*}
    for all $1\leqslant k<j\leqslant m_i$ and $1\leqslant s<l\leqslant m_i$. Let $\bvar_i=\bA_i^{-\frac{1}{2}}(\by_i-\bmu_i)$, and thus $\bR_i^{-1}\hat{\bR}_i\bR_{i}^{-1}=\bR_i^{-1}\bvar_i\bvar_i^\top\bR_i^{-1}$. Let $\bT_{ij}$ be the $j$th column of $\bR_i^{-1}$. We have $(\bR_i^{-1}\hat{\bR}_i\bR_i^{-1})_{jk}=\bT_{ij}^\top\bvar_i\bvar_i^\top\bT_{ik}$. Thus, we have
    \begin{equation*}
        \E[(\bR_i^{-1}\hat{\bR}_i\bR_i^{-1})_{jk}(\bR_i^{-1}\hat{\bR}_i\bR_i^{-1})_{ls}]=\bT_{ij}^\top\E(\bvar_i\bvar_i^\top\bT_{ik}\bT_{il}^\top\bvar_i\bvar_i^\top)\bT_{is}.
    \end{equation*}
    By Lemma~\ref{lem:calculate}, we have
    \begin{equation}\label{eq:rewrite_H2}
        \E(\eta_{ijk}\eta_{ils})=\bT_{ij}^\top[\bR_i\bT_{ik}\bT_{il}^\top\bR_i+\bR_i\bT_{il}\bT_{ik}^\top\bR_i+\tr(\bT_{ik}\bT_{il}^\top\bR_i)\bR_i+\bK(\bT_{ik}\bT_{il}^\top)]\bT_{is}-a_{ijk}a_{ils}.
    \end{equation}
    Since $\bR_i\bR_i^{-1}=\bR_i(\bT_{i1},\bT_{i2},\dots,\bT_{i,m_i})=\bI$, we have $\bR_i\bT_{ik}=\be_k$ and $\bT_{il}^\top\bR_i=(\bR_i\bT_{il})^\top=\be_l^\top$.
    Thus, we can rewrite \eqref{eq:rewrite_H2} as follows:
    \begin{align*}
        \E(\eta_{ijk}\eta_{ils})= & \ \bT_{ij}^\top[\be_k\be_l^\top+\be_l\be_k^\top+\tr(\bT_{ik}\be_l^\top)\bR_i]\bT_{is}+\bT_{ij}^\top[\bK(\bT_{ik}\bT_{il}^\top)]\bT_{is}-a_{ijk}a_{ils}\\
        = & \ a_{ijl}a_{iks}+a_{ijs}a_{ikl} + \bT_{ij}^\top[\bK(\bT_{ik}\bT_{il}^\top)]\bT_{is}.
    \end{align*}
    Then, we need to specify the form of $\bK(\bB)$, where $\bB$ is an arbitrary matrix.
    Regardless of the permutation, we have five different cases of $\E(\varepsilon_{ij}\varepsilon_{ik}\varepsilon_{il}\varepsilon_{is})$ as follows:
    \begin{equation*}
        \E(\varepsilon_{ij}\varepsilon_{ik}\varepsilon_{il}\varepsilon_{is})=
        \begin{cases}
            \E(\varepsilon_{ij}^4), & \text{if}\ j=k=l=s\\
            \E(\varepsilon_{ij}^3\varepsilon_{is}),\quad & \text{if}\ j=k=l\neq s\\
            \E(\varepsilon_{ij}^2\varepsilon_{il}^2),\quad & \text{if}\ j=k\neq l= s\\
            \E(\varepsilon_{ij}^2\varepsilon_{il}\varepsilon_{is}),\quad & \text{if}\ j=k\neq l\neq s\\
            \E(\varepsilon_{ij}\varepsilon_{ik}\varepsilon_{il}\varepsilon_{is}),\quad & \text{if}\ j\neq k\neq l\neq s\\
        \end{cases}.
    \end{equation*}
    Base on the marginal distribution, $\E(\varepsilon_{ij}^4)$ can be calculated directly, which takes the following form:
    \begin{equation*}
        \E(\varepsilon_{ij}^4)=3+\phi \frac{a^{(4)}(\theta_{ij})}{(a''(\theta_{ij}))^2},\ \forall i,j.
    \end{equation*}
    However, since the joint distribution is not specified, the other four expectation terms cannot be obtained. Therefore, we adapt the diagonal cumulant assumption, which ignores higher-order dependencies across components, to approximate the fourth-order cumulant, i.e., $\kappa_{jkls}=0$ if $j,k,l,s$ are not exactly the same. Thus, we have
        \begin{align*}
        \kappa_{jkls}= & \ \E(\varepsilon_{ij}\varepsilon_{ik}\varepsilon_{il}\varepsilon_{is})-[\bR_{i}]_{jk}[\bR_{i}]_{ls}-[\bR_{i}]_{jl}[\bR_{i}]_{ks}-[\bR_{i}]_{js}[\bR_{i}]_{kl}\\
        = & 
        \begin{cases}
            \phi \frac{a^{(4)}(\theta_{ij})}{(a''(\theta_{ij}))^2},\quad & \text{if}\ j=k=l=s\\
            0,\quad & \text{otherwise}\\
        \end{cases}.
    \end{align*}
    Then, $\bK(\bB)$ takes the following form:
    \begin{equation*}
        [\bK(\bB)]_{jk}=\sum\limits_{l=1}^{m_i}\sum_{s=1}^{m_i}\bB_{ls}\kappa_{jkls}=
        \begin{cases}
            \phi\frac{a^{(4)}(\theta_{ij})}{(a''(\theta_{ij}))^2}\bB_{jj},\quad & \text{if}\ j=k\\
            0,\quad & \text{otherwise} \\
        \end{cases},
    \end{equation*}
    and we have 
    \begin{equation*}
        \bT_{ij}^\top[\bK(\bT_{ik}\bT_{il}^\top)]\bT_{is}=\phi\sum\limits_{t=1}^{m_i}\frac{a^{(4)}(\theta_{it})}{(a''(\theta_{it}))^2}a_{itj}a_{its}a_{itk}a_{itl}.
    \end{equation*}
    Thus, the expression of \eqref{eq:rewrite_H2} under the diagonal cumulant assumption is shown as follows:
    \begin{equation}\label{eq:pseudo_exptation}
        \E(\eta_{ijk}\eta_{ils})=a_{ijl}a_{iks}+a_{ijs}a_{ikl} +\phi\sum\limits_{t=1}^{m_i}\frac{a^{(4)}(\theta_{it})}{(a''(\theta_{it}))^2}a_{itj}a_{its}a_{itk}a_{itl}.
    \end{equation}
    However, in practice, the above result usually gives a much larger step length resulting in the instability. To overcome this, we simply replace the $a_{itj}$ in the last term of \eqref{eq:pseudo_exptation} by $b_{itj}$ for adjustment, where $b_{itj}$ denotes the $(t,j)$th element of $\bR_i^{-1/2}$.
    Therefore, the final expression of the pseudo-expectation takes the following form:
    \begin{equation*}
        \tilde{\E}(\eta_{ijk}\eta_{ils})=a_{ijl}a_{iks}+a_{ijs}a_{ikl} +\phi\sum\limits_{t=1}^{m_i}\frac{a^{(4)}(\theta_{it})}{(a''(\theta_{it}))^2}b_{itj}b_{its}b_{itk}b_{itl}.
    \end{equation*}
    Then, it is easy to obtain $\tilde{\E}(\bJ_i)$ in the matrix form.
\end{proof}

\section{Theory}\label{sec:theory}

We now establish the asymptotic properties of the proposed estimators. To this end, we impose a set of regularity conditions that ensure identifiability, stability of the estimating equations, and the validity of asymptotic arguments. 

\subsection{Asymptotic Property} 

\medskip
\noindent{\textbf{Regularity conditions.}}
We assume:  
\begin{itemize}
    \item[(A1)] The working marginal model and the correlation model are correctly specified.  
    \item[(A2)] The dimensions \(p\) and \(d\) of the covariates \(\bx_{ij}\) and \(\bw_{ij}\), respectively, are fixed as \(n \to \infty\); the covariates and the maximum cluster size \(\max_{1 \le i \le n} m_i\) are uniformly bounded.  
    \item[(A3)] The parameter space \(\bOmega\) of \((\balpha^\top, \bbeta^\top, \phi)^\top\) is a compact subset of \(\mathbb{R}^{d + p + 1}\), and the true parameter value \((\balpha_0^\top, \bbeta_0^\top, \phi_0)^\top\) lies in the interior of \(\bOmega\).  
\end{itemize}

Condition (A1) ensures the model is well-specified, while (A2) restricts the growth of covariate and cluster dimensions to control the complexity of the problem. The compactness assumption (A3) prevents pathological cases where parameters may diverge or approach the boundary of the parameter space.  

\medskip
\noindent{\textbf{Asymptotic behavior of \(\hat{\bbeta}_n\).}}
We first consider the regression coefficients \(\bbeta\), which are estimated through the GEE component of our procedure. The following result, adapted from \citet{Xie:2003}, shows that under mild assumptions on the correlation and scale parameters, the estimator \(\hat{\bbeta}_n\) inherits the usual large-sample properties of GEE estimators.  

\begin{thm}[Theorem 4, \cite{Xie:2003}]\label{thm:beta_con}
    Suppose that \(\hat{\phi}_n - \phi_0 = O_p(n^{-1/2})\) when \(\bbeta_0\) is known and \(\hat{\balpha}_n - \balpha_0 = O_p(n^{-1/2})\) when \(\bbeta_0\) and \(\phi_0\) are known. Then, under regularity conditions (A1)--(A3) and the conditions in \cite{Xie:2003}, the estimator \(\hat{\bbeta}_n\) is weakly consistent for \(\bbeta_0\) and satisfies the asymptotic normality
    \[
        \bM_{1,n}^{1/2}(\hat{\bbeta}_n - \bbeta_0) \overset{d}{\rightarrow} \mathrm{N}(0, \bI),
    \]
    where \(\bM_{1,n} = \bH_1(\balpha_0,\bbeta_0,\phi_0)\) is defined in Section~\ref{sec:algorithm}.
\end{thm}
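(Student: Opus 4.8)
The statement is quoted verbatim from Theorem~4 of \citet{Xie:2003}, so the proof reduces to verifying that the hypotheses of that theorem hold in the present setup and then invoking it; I also indicate the self-contained argument underneath. The object of interest is the GEE score $\bS_1(\bbeta;\balpha,\phi) = \sum_{i=1}^n \bD_i^\top \bV_i^{-1}(\by_i - \bmu_i)$ from Section~\ref{sec:algorithm}, with the nuisance parameters replaced by the pilot estimators $(\hat{\balpha}_n,\hat{\phi}_n)$, and $\hat{\bbeta}_n$ solves $\bS_1(\hat{\bbeta}_n;\hat{\balpha}_n,\hat{\phi}_n)=\mathbf{0}$. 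Two structural facts guaranteed by (A1) organize the whole argument: (i) $\E(\by_i-\bmu_i)=\mathbf{0}$ at $\bbeta_0$, hence $\E\{\bS_1(\bbeta_0;\balpha,\phi)\}=\mathbf{0}$ for \emph{every} admissible $(\balpha,\phi)$ --- the estimating function is unbiased no matter the working correlation; and (ii) the correlation model is also correct, so $\Var(\by_i-\bmu_i)=\bV_i$ exactly, which collapses the usual sandwich into the model-based form $\bM_{1,n}^{-1}$.

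For consistency, I would first note that under (A2)--(A3) the map $(\bbeta,\balpha,\phi)\mapsto n^{-1}\bS_1$ and its derivatives are uniform-in-parameter averages of smooth functions of uniformly bounded quantities, so a uniform law of large numbers gives a limit $\bar{\bS}_1(\bbeta;\balpha,\phi)$ whose unique zero in $\bbeta$ is $\bbeta_0$ for each fixed $(\balpha,\phi)$ --- this follows from fact (i) together with the nonsingularity of $n^{-1}\bM_{1,n}$, which serves as the local identification condition. Since the stated rates force $\hat{\balpha}_n\xrightarrow{p}\balpha_0$ and $\hat{\phi}_n\xrightarrow{p}\phi_0$, a standard zero-crossing argument for estimating equations (the Foutz-type / implicit-function route used in \citet{Xie:2003}) yields $\hat{\bbeta}_n\xrightarrow{p}\bbeta_0$.

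For asymptotic normality I would combine two expansions. Holding $\bbeta=\bbeta_0$ fixed and expanding in the nuisance parameters, $\bS_1(\bbeta_0;\hat{\balpha}_n,\hat{\phi}_n) = \bS_1(\bbeta_0;\balpha_0,\phi_0) + \bC_n(\hat{\balpha}_n-\balpha_0) + {\bf d}_n(\hat{\phi}_n-\phi_0) + \br_n^{(1)}$, where $\bC_n=\partial\bS_1/\partial\balpha^\top$ and ${\bf d}_n=\partial\bS_1/\partial\phi$ are evaluated at $\bbeta_0$ and intermediate nuisance values; because differentiating $\bS_1$ in $\balpha$ or $\phi$ only touches $\bV_i^{-1}$ while leaving the exactly mean-zero residual $\by_i-\bmu_i(\bbeta_0)$ intact, both $\bC_n$ and ${\bf d}_n$ are sums of $n$ mean-zero independent terms, hence $O_p(n^{1/2})$, so $\bC_n(\hat{\balpha}_n-\balpha_0)$ and ${\bf d}_n(\hat{\phi}_n-\phi_0)$ are $O_p(1)$ --- negligible next to $\bS_1(\bbeta_0;\balpha_0,\phi_0)=O_p(n^{1/2})$, while $\br_n^{(1)}=o_p(n^{1/2})$ by (A2) and consistency. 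Expanding the equation $\bS_1(\hat{\bbeta}_n;\hat{\balpha}_n,\hat{\phi}_n)=\mathbf{0}$ in $\bbeta$ then gives
\[
\mathbf{0} = \bS_1(\bbeta_0;\balpha_0,\phi_0) + \bB_n(\hat{\bbeta}_n-\bbeta_0) + \br_n,
\]
where $-n^{-1}\bB_n\xrightarrow{p} n^{-1}\bM_{1,n}$ by a law of large numbers (the term contracting $\partial\bD_i/\partial\bbeta$ against residuals is mean-zero and $o_p(n)$) and $\br_n=o_p(n^{1/2})$, so $n^{1/2}(\hat{\bbeta}_n-\bbeta_0)=(n^{-1}\bM_{1,n})^{-1}n^{-1/2}\bS_1(\bbeta_0;\balpha_0,\phi_0)+o_p(1)$. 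Finally, a Lindeberg--Feller CLT applied to $n^{-1/2}\bS_1(\bbeta_0;\balpha_0,\phi_0)=n^{-1/2}\sum_i\bD_i^\top\bV_i^{-1}(\by_i-\bmu_i)$ --- independent across $i$, mean zero, covariance $n^{-1}\sum_i\bD_i^\top\bV_i^{-1}\Var(\by_i-\bmu_i)\bV_i^{-1}\bD_i=n^{-1}\bM_{1,n}$ by fact (ii) --- together with Slutsky's lemma yields $\bM_{1,n}^{1/2}(\hat{\bbeta}_n-\bbeta_0)\xrightarrow{d}\mathrm{N}(\mathbf{0},\bI)$.

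The main obstacle, and the only genuinely non-routine step, is showing that the plug-in of $\hat{\balpha}_n$ and $\hat{\phi}_n$ does not contaminate the limiting distribution; this is precisely where fact (i), a Neyman-orthogonality-type property of GEE scores, is essential, forcing the nuisance-estimation error to enter only at order $O_p(1)=o_p(n^{1/2})$. One must also control the remainders uniformly over a shrinking neighborhood of $(\bbeta_0,\balpha_0,\phi_0)$ and verify that $n^{-1}\bM_{1,n}$ stays bounded and bounded away from singularity; these are exactly the additional hypotheses referred to as ``the conditions in \citet{Xie:2003}'' in the statement, so in the write-up I would check them under (A1)--(A3) and then cite Theorem~4 of \citet{Xie:2003} for the conclusion.
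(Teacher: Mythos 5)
Your proposal is correct and takes essentially the same route as the paper: the paper gives no separate proof of this result, simply importing it as Theorem~4 of \citet{Xie:2003} under (A1)--(A3) and the $\sqrt{n}$-consistency of the nuisance estimators, which is exactly how you frame it. Your additional sketch (unbiasedness of the GEE score for any working $(\balpha,\phi)$, negligibility of the plug-in nuisance error via the mean-zero derivative terms, collapse of the sandwich to $\bM_{1,n}$ under correct specification, and the Lindeberg--Feller CLT with Slutsky) is the standard argument underlying that cited theorem and contains no gaps at the level of detail required here.
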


Intuitively, as long as the correlation and scale parameters are consistently estimated at a sufficiently fast rate, the regression coefficient estimates behave asymptotically as if those nuisance parameters were known.  

\medskip
\noindent{\textbf{Identifiability for \(\balpha\).}}
We now turn to the correlation parameter \(\balpha\). A key ingredient for the consistency of \(\hat{\balpha}_n\) is the identifiability of the pseudo-likelihood function. For Gaussian working models, this follows from a well-known property of the Kullback-Leibler divergence:  
\[
    \E_0 \left( \log \frac{\ell_i(\by_i \mid \balpha, \bbeta_0, \phi_0)}{\ell_i(\by_i \mid \balpha_0, \bbeta_0, \phi_0)} \right) \le 0,
\]
with equality if and only if \(\balpha = \balpha_0\) \citep{Gourieroux:1984}.  

To establish consistency and asymptotic normality for \(\hat{\balpha}_n\), we additionally require the following conditions, which mirror classical results for maximum likelihood estimation of correlation parameters:  

\begin{itemize}
    \item[(B1)] For any \(\balpha \neq \balpha_0\), the average log-likelihood ratio converges to a negative number:
    \[
        \frac{1}{n} \sum_{i=1}^n \E_0 \left( \log \frac{\ell_i(\by_i \mid \balpha, \bbeta_0, \phi_0)}{\ell_i(\by_i \mid \balpha_0, \bbeta_0, \phi_0)} \right) \;\to\; \text{a strictly negative limit.}
    \]
    \item[(B2)] The Fisher information for \(\balpha\) is well-behaved, i.e.,
    \[
        \frac{1}{n} \sum_{i=1}^n \E \!\left( 
            \frac{\partial \log \ell_i(\by_i \mid \balpha_0, \bbeta_0, \phi_0)}{\partial \balpha} 
            \frac{\partial \log \ell_i(\by_i \mid \balpha_0, \bbeta_0, \phi_0)}{\partial \balpha^\top} 
        \right)
        \;\to\; \text{a positive definite matrix.}
    \]
    \item[(B3)] The expected Hessian matrix of \(\log\ell_i\) for \(\balpha\) is well-behaved, i.e.,
    \[
        \frac{1}{n} \sum_{i=1}^n \E \!\left( 
            \frac{\partial^2 \log \ell_i(\by_i \mid \balpha_0, \bbeta_0, \phi_0)}{\partial \balpha\partial \balpha^\top} 
        \right)
        \;\to\; \text{a non-singular matrix.}
    \]
\end{itemize}

Condition (B1) ensures that the true parameter \(\balpha_0\) uniquely maximizes the expected log-pseudo-likelihood, while (B2) and (B3) guarantee sufficient curvature for asymptotic normality.  

\medskip
\noindent{\textbf{Asymptotic behavior of \(\hat{\balpha}_n\).}}
With these conditions in place, we can show that the pseudo-likelihood estimator of \(\balpha\) is both consistent and asymptotically efficient in the usual sense:  

\begin{thm}\label{thm:alpha_con}
    Under regularity assumptions (A1)--(A3) and identifiability conditions (B1)--(B3), and given \(\bbeta_0\) and \(\phi_0\), the estimator \(\hat{\balpha}_n\) is strongly consistent for \(\balpha_0\) and satisfies
    \[
        \bM_{2,n}(\hat{\balpha}_n - \balpha_0) \;\overset{d}{\rightarrow}\; \mathrm{N}\!\big(0, \bI\big),
    \]
    where \(\bM_{2,n}=\bH_2(\balpha_0,\bbeta_0,\phi_0)^{-\frac{1}{2}}\sum_{i=1}^n\frac{\partial^2\log \ell_i(\by_i|\balpha_0,\bbeta_0, \phi_0)}{\partial\balpha\partial\balpha^\top}\).
\end{thm}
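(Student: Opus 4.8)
The plan is to treat $\hat{\balpha}_n$ as an M-estimator that maximizes the Gaussian pseudo-log-likelihood $L_n(\balpha)=\sum_{i=1}^n\log\ell_i(\by_i\mid\balpha,\bbeta_0,\phi_0)$ over the compact set $\bOmega$ with $\bbeta_0,\phi_0$ held fixed, and to run the classical two-step argument: strong consistency via a uniform strong law, then asymptotic normality via a mean-value expansion of the score. Two analytic facts serve as workhorses throughout. First, under (A2) each cluster contributes a uniformly bounded, smooth summand. Second, on $\bOmega$ the map $\balpha\mapsto\bR_i(\balpha)=f^{-1}(\bW_i\balpha)$ is twice continuously differentiable with $\bR_i(\balpha)$ uniformly well-conditioned (eigenvalues bounded away from $0$ and $\infty$, uniformly in $i$ and $\balpha$); this follows from the smoothness and well-posedness of the inverse generalized $z$-transformation---the Archakov--Hansen diagonal-rescaling fixed point, whose Jacobian $\partial\brho/\partial\bgamma$ has the explicit non-singular form in Section~\ref{sec:algorithm}---together with the compactness in (A3). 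We also use independence across clusters, as assumed in the data setting.

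For consistency I would first record the Gaussian quasi-likelihood identity: because (A1) makes the correlation model correctly specified, $\E_0[\log\ell_i(\by_i\mid\balpha,\bbeta_0,\phi_0)]$ depends on $\balpha$ only through $-\tfrac12\log|\bR_i(\balpha)|-\tfrac12\tr(\bR_i(\balpha)^{-1}\bR_i(\balpha_0))$, which is uniquely maximized at $\bR_i(\balpha)=\bR_i(\balpha_0)$---precisely the inequality of \citet{Gourieroux:1984} quoted before the theorem---and (B1) upgrades this to a strictly negative limit of the normalized expected log-likelihood ratio, so $\balpha_0$ is a well-separated maximizer of the population objective. Next I would establish $\sup_{\balpha\in\bOmega}|n^{-1}L_n(\balpha)-n^{-1}\E_0 L_n(\balpha)|\to 0$ almost surely: pointwise a.s.\ convergence by the Kolmogorov SLLN for the bounded independent summands, upgraded to uniform convergence by stochastic equicontinuity, which holds because $\partial\log\ell_i/\partial\balpha$ is uniformly bounded on $\bOmega$ thanks to (A2) and the uniform spectral bounds on $\bR_i(\balpha)$. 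Combining the well-separated maximizer with uniform a.s.\ convergence yields $\hat{\balpha}_n\to\balpha_0$ almost surely by the standard argmax (Wald) argument.

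For asymptotic normality, write $\bU_n(\balpha)=\partial L_n(\balpha)/\partial\balpha=\bS_2(\balpha;\bbeta_0,\phi_0)$ and $\bH_n(\balpha)=-\partial^2 L_n(\balpha)/\partial\balpha\partial\balpha^\top$. Since $\hat{\balpha}_n$ is consistent and $\balpha_0$ is interior, eventually $\bU_n(\hat{\balpha}_n)=\mathbf{0}$, and a mean-value expansion gives $\hat{\balpha}_n-\balpha_0=\bH_n(\tilde{\balpha}_n)^{-1}\bU_n(\balpha_0)$ for some $\tilde{\balpha}_n$ between $\hat{\balpha}_n$ and $\balpha_0$. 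The score $\bU_n(\balpha_0)=\sum_i\bS_{2,i}(\balpha_0)$ is a sum of independent vectors that are mean zero at the truth---its expectation involves only the first two moments of $\by_i$, matched under (A1), so $\E_0[\bR_i^{-1}\hat{\bR}_i\bR_i^{-1}-\bR_i^{-1}]=\mathbf{0}$---with $\Var(\bU_n(\balpha_0))=\sum_i\E_0[\bS_{2,i}\bS_{2,i}^\top]$, which by construction equals $\E_0[\bH_2(\balpha_0,\bbeta_0,\phi_0)]$ since $\bS_{2,i}\bS_{2,i}^\top$ has exactly the outer-product form defining $\bH_{2,i}$. By (B2) the normalized variance tends to a positive definite limit, and the Lindeberg--Feller CLT (summands uniformly bounded by (A2)) gives $\bH_2(\balpha_0,\bbeta_0,\phi_0)^{-1/2}\bU_n(\balpha_0)\overset{d}{\rightarrow}\mathrm{N}(\mathbf{0},\bI)$. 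A ULLN as in the consistency step plus $\tilde{\balpha}_n\to\balpha_0$ gives $n^{-1}\bH_n(\tilde{\balpha}_n)\to\bar{\bH}$, the non-singular limit in (B3), and in fact $\bH_n(\tilde{\balpha}_n)=\bH_n(\balpha_0)+o_p(n)$ by Lipschitz continuity of the Hessian (bounded third derivatives on $\bOmega$). Setting $\bM_{2,n}=\bH_2(\balpha_0,\bbeta_0,\phi_0)^{-1/2}\bH_n(\balpha_0)$ and substituting, $\bM_{2,n}(\hat{\balpha}_n-\balpha_0)=\bH_2(\balpha_0,\bbeta_0,\phi_0)^{-1/2}\bU_n(\balpha_0)+o_p(1)$---the Hessian-difference error being $o_p(n)\cdot O_p(n^{-1/2})$ rescaled by $\bH_2^{-1/2}=O_p(n^{-1/2})$---and Slutsky delivers the stated convergence. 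The rate $\hat{\balpha}_n-\balpha_0=O_p(n^{-1/2})$ needed above comes for free from the expansion once $n^{-1}\bH_n$ is bounded below and $\bU_n(\balpha_0)=O_p(n^{1/2})$.

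The main obstacle I anticipate is not the probabilistic machinery---routine given (A2) and (B1)--(B3)---but verifying the uniform analytic regularity of the parametrization: that $\balpha\mapsto\bR_i(\balpha)$ is $C^2$ with derivatives bounded and with $\bR_i(\balpha)$ uniformly well-conditioned over the compact $\bOmega$, uniformly in $i$ and in the bounded cluster configurations. This reduces to controlling the implicitly defined inverse map $f^{-1}$ of \citet{Archakov:2020} and its Jacobian $\partial\brho/\partial\bgamma$---showing the Jacobian stays non-singular and the eigenvalues of $\bR_i(\balpha)$ do not collapse to the boundary as $\balpha$ ranges over $\bOmega$. Once these uniform bounds are in place, the moment conditions for the SLLN, the equicontinuity bound for the ULLN, and the Lindeberg condition for the CLT all follow by routine estimates.
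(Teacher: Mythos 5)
Your proposal is correct and follows essentially the same route as the paper's proof: strong consistency via a uniform strong law over the compact parameter space combined with the identification condition (B1) (the paper phrases the argmax step as a subsequence--contradiction argument), and asymptotic normality via a mean-value expansion of the pseudo-score, a CLT for the score whose variance is the outer-product matrix $\bH_2$ handled through (B2), and Hessian convergence through (B3) plus consistency, finishing with Slutsky. The only cosmetic difference is that your $\bM_{2,n}=\bH_2^{-1/2}\bH_n(\balpha_0)$ carries the opposite sign from the statement's normalization, which is immaterial since (as the paper itself notes) the limiting Gaussian is symmetric.
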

The exact expression of the leading matrix component of \(\bM_{2,n}\) matches \(\bH_2\) defined in Section~\ref{sec:algorithm}. However, due to the high complexity of the Hessian matrix, we employ the numerical method to obtain its approximation. Denote
\begin{equation*}
     \ba_i(h)=\frac{\bS_2(\balpha+h\be_i;\bbeta,\phi)-\bS_2(\balpha-h\be_i;\bbeta,\phi)}{2h},
\end{equation*}
where $\bS_2$ is the score function defined in Section~\ref{sec:algorithm} and $h\in\mathbb{R}^+$ denotes a sufficiently small constant.
Let \(\bA(h)=(\ba_1(h),\ba_2(h),\dots,\ba_d(h))\).
Lemma 4.2.2 in \cite{dennis:1996} shows that, under mild conditions, 
\begin{equation*}
    \left\|\hat{\bA}(h)-\sum_{i=1}^n\frac{\partial^2\log \ell_i(\by_i|\balpha,\bbeta, \phi)}{\partial\balpha\partial\balpha^\top}\right\|=O(nh),
\end{equation*}
where \(\hat{\bA}(h)=(\bA(h)+\bA(h)^\top)/2\) since the approximation should be symmetric. Thus, the Hessian matrix can be well approximated by \(\hat{\bA}(h)\) for \(h\) sufficiently small.

\medskip
\noindent{\textbf{Discussion.}}
Theorems~\ref{thm:beta_con} and~\ref{thm:alpha_con} together show that the proposed two-stage procedure enjoys desirable large-sample guarantees. In particular, the estimation of \(\bbeta\) is asymptotically insensitive to small misspecification of \(\balpha\), while the pseudo-likelihood estimator of \(\balpha\) remains consistent due to the identifiability of the Gaussian working model. Importantly, these results hold under bounded cluster sizes and fixed covariate dimensions; extensions to high-dimensional or diverging-cluster settings would require additional technical tools, which we leave for future work. 

\subsection{Proof of Theorem~\ref{thm:alpha_con}}

\begin{proof}[Proof of Theorem \ref{thm:alpha_con}]
    Let $\bpi_0=(\bbeta^\top_0,\phi_0)^\top$ and define
    \begin{equation*}
         \psi_i(\by_i|\balpha,\bpi_0)=\log \ell_i(\by_i|\balpha,\bpi_0) - \log \ell_i(\by_i|\balpha_0,\bpi_0) = \log \frac{\ell_i(\by_i|\balpha,\bpi_0)}{\ell_i(\by_i|\balpha_0,\bpi_0)}.
    \end{equation*}
    Because of the compactness of the parameter space $\bOmega$ and the boundedness of the covariates, 
    it is easy to show that there exists a constant $K$ independent of $\balpha$ such that $\Var_0[\psi_i(\by_i|\balpha,\bpi_0)]<K$, for all $i$.
    Let 
    \begin{align*}
        \mathscr{L}_n(\balpha,\bpi_0)& =\sum_{i=1}^n\log \ell_i(\by_i|\balpha,\bpi_0),\\
        \bar{\mathscr{L}}_n(\balpha,\bpi_0) & =\frac{1}{n}\mathscr{L}_n(\balpha,\bpi_0).
    \end{align*}
    By the strong law of large numbers, we have
    \begin{equation*}
        \bar{\mathscr{L}_n}(\balpha,\bpi_0)-\bar{\mathscr{L}_n}(\balpha_0,\bpi_0)-\frac{1}{n}\sum\limits_{i=1}^n\E_0\left(\log\frac{\ell_i(\by_i|\balpha,\bpi_0)}{\ell_i(\by_i|\balpha_0,\bpi_0)}\right)\overset{a.s.}\rightarrow0.
    \end{equation*}
    Because the constant $K$ is independent of $\balpha$ and by our assumptions, it can be shown that the convergence above is uniform in $\balpha$. It can also be shown that $\frac{1}{n}\sum_{i=1}^n\E_0\left(\log\frac{\ell_i(\by_i|\balpha,\bpi_0)}{\ell_i(\by_i|\balpha_0,\bpi_0)}\right)$ is equicontinuous in $\balpha$ and converges to a finite number $K(\balpha)$ uniformly.
    Thus, we have
    \begin{equation*}
        \bar{\mathscr{L}_n}(\balpha,\bpi_0)-\bar{\mathscr{L}_n}(\balpha_0,\bpi_0)\overset{a.s.}\rightarrow K(\balpha)
    \end{equation*}
    uniformly in $\balpha$. By condition (B1), we have $K(\balpha)<0$ for $\balpha\neq\balpha_0$. 

    For a contradiction, assume that there exists a set of positive probability where $\{\hat{\balpha}_n\}_{n=1}^\infty$ does not converge to $\balpha_0$. Because of the compactness of the parameter space $\bOmega$, there exists a subsequence $\{\hat{\balpha}_m\}_{m=1}^\infty$ and a limit point $\tilde{\balpha}\neq\balpha_0$ such that $\hat{\balpha}_m\rightarrow\tilde{\balpha}$. Because $\hat{\balpha}_m$ produces a maximum for every $m$, we have
    \begin{equation*}
        \bar{\mathscr{L}}_m(\hat{\balpha}_m,\bpi_0)-\bar{\mathscr{L}}_m(\balpha_0,\bpi_0)\geqslant0.
    \end{equation*}
    It is easy to see that $K(\balpha)$ is in continuous in $\balpha$. Then by uniform convergence and continuity of the limit, we have $K(\tilde{\balpha})\geqslant0$,
    which is a contradiction!
    Thus, we conclude that $\hat{\balpha}_n$ is strongly consistent for $\balpha_0$.

    Next, we give the proof of the asymptotic normality. Because $\hat{\balpha}_n$ is the maximum of $\mathscr{L}_n(\balpha,\bpi_0)$, we have
    \begin{align*}
        0 & =\sum\limits_{i=1}^n\frac{\partial\log \ell_i(\by_i|\hat{\balpha}_n,\bpi_0)}{\partial \balpha}\\
        & =\sum\limits_{i=1}^n\frac{\partial\log \ell_i(\by_i|\balpha_0,\bpi_0)}{\partial\balpha}+\sum\limits_{i=1}^n\frac{\partial^2\log \ell_i(\by_i|\tilde{\balpha}_n,\bpi_0)}{\partial\balpha\partial\balpha^\top}(\hat{\balpha}_n-\balpha_0),
    \end{align*}
    where $\tilde{\balpha}_n$ lies between $\hat{\balpha}_n$ and $\balpha_0$. Thus, we have
    \begin{equation*}
        \sqrt{n}(\hat{\balpha}_n-\balpha_0)=\left(-\frac{1}{n}\sum\limits_{i=1}^n\frac{\partial^2\log \ell_i(\by_i|\tilde{\balpha}_n,\bpi_0)}{\partial\balpha\partial\balpha^\top}\right)^{-1}\frac{1}{\sqrt{n}}\sum\limits_{i=1}^n\frac{\partial\log \ell_i(\by_i|\balpha_0,\bpi_0)}{\partial\balpha}.
    \end{equation*}
    Denote
    \begin{align*}
        \tilde{\boldsymbol{\mathcal{H}}}_n(\balpha) & =-\frac{1}{n}\sum\limits_{i=1}^n\frac{\partial^2\log \ell_i(\by_i|\balpha,\bpi_0)}{\partial\balpha\partial\balpha^\top},\\
        \boldsymbol{\mathcal{H}}_n(\balpha) & =-\frac{1}{n}\sum\limits_{i=1}^n\E\left(\frac{\partial^2\log \ell_i(\by_i|\balpha,\bpi_0)}{\partial\balpha\partial\balpha^\top}\right),\\
        \tilde{\boldsymbol{\mathcal{I}}}_n(\balpha) & =\frac{1}{n}\sum\limits_{i=1}^n\frac{\partial\log \ell_i(\by_i|\balpha,\bpi_0)}{\partial\balpha}\frac{\partial\log \ell_i(\by_i|\balpha,\bpi_0)}{\partial\balpha^\top},\\
        \boldsymbol{\mathcal{I}}_n(\balpha) & =\frac{1}{n}\sum\limits_{i=1}^n\E\left(\frac{\partial\log \ell_i(\by_i|\balpha,\bpi_0)}{\partial\balpha}\frac{\partial\log \ell_i(\by_i|\balpha,\bpi_0)}{\partial\balpha^\top}\right).
    \end{align*}
    Because of the compactness of the parameter space $\Omega$ and the boundedness of the covariates, we have $\tilde{\boldsymbol{\mathcal{H}}}_n(\balpha)-\boldsymbol{\mathcal{H}}_n(\balpha)\overset{a.s.}{\rightarrow}0$ and $\tilde{\boldsymbol{\mathcal{I}}}_n(\balpha)-\boldsymbol{\mathcal{I}}_n(\balpha)\overset{a.s.}{\rightarrow}0$ both uniformly in $\balpha$. 
    It can also be shown that $\boldsymbol{\mathcal{H}}_n(\balpha)$ is equicontinuous in $\balpha$. 
    By our assumptions, we have $\boldsymbol{\mathcal{H}}_n(\balpha_0)\overset{a.s.}{\rightarrow}\boldsymbol{\mathcal{H}}(\balpha_0)$, and thus we have $\boldsymbol{\mathcal{H}}_n(\tilde\balpha_n)\overset{a.s.}{\rightarrow}\boldsymbol{\mathcal{H}}(\balpha_0)$ since $\hat\balpha_n\overset{a.s.}{\rightarrow}\balpha_0$ and $\tilde\balpha_n$ lies between them. Combining the above results, we have
    $\tilde{\boldsymbol{\mathcal{H}}}_n(\balpha_0)\tilde{\boldsymbol{\mathcal{H}}}_n(\tilde{\balpha}_n)^{-1}\overset{a.s.}{\rightarrow}\bI$.

    By our assumptions and the central limit theorem, we have
    \begin{equation*}
        \frac{1}{\sqrt{n}}\sum\limits_{i=1}^n\frac{\partial\log \ell_i(\by_i|\balpha_0,\bpi_0)}{\partial\balpha}\overset{d}{\rightarrow}\mathrm{N}(0,\boldsymbol{\mathcal{I}}(\balpha_0)),
    \end{equation*}
    where $\boldsymbol{\mathcal{I}}(\balpha_0)=\lim\limits_{n\rightarrow\infty}\boldsymbol{\mathcal{I}}_n(\balpha_0)$
    is a positive definite matrix.
    Denote
    \begin{equation*}
        \bM_{2,n}=-\sqrt{n}\tilde{\boldsymbol{\mathcal{I}}}_n^{-\frac{1}{2}}(\balpha_0)\tilde{\boldsymbol{\mathcal{H}}}_n(\balpha_0),
    \end{equation*}
    and thus we have
    \begin{equation*}
        \bM_{2,n}(\hat{\balpha}_n-\balpha_0)\overset{d}{\rightarrow}\mathrm{N}(0,\bI)
    \end{equation*}
    by Slutsky's theorem and the symmetry of the Gaussian distribution.
\end{proof}

\section{Simulations}\label{sec:sim}

In this section, we investigate the finite-sample performance of the proposed method through simulation studies. We focus on evaluating the accuracy of parameter estimation, the empirical coverage of confidence intervals, and the impact of sample size and distributional settings.

\subsection{Performance of the Proposed Approach}  \label{sec:study1}
We first validate the proposed method by generating data from three canonical exponential family distributions: Gaussian, Poisson, and Bernoulli. For each distribution, the corresponding canonical link function is used.  

Two global settings are considered for Study~1:  
(i) the cluster size \(m_i\) for each group is generated as \(m_i \sim \operatorname{Bin}(6, 0.8) + 1\), ensuring varying but bounded cluster sizes;  
(ii) the total sample size is varied over \(n = 100, 200,\) and \(400\), with each configuration repeated 1000 times to obtain stable Monte Carlo estimates.  

For each simulated dataset, we compute:  
- the mean absolute errors (MAE) for all parameters, along with their Monte Carlo standard errors in parentheses;  
- the empirical coverage of the nominal \(95\%\) confidence intervals for \(\balpha\) and \(\bbeta\).  
For ease of presentation, all results are multiplied by 100.  

\medskip
\noindent{\textbf{Gaussian data generation.}}
For each cluster \(i\), we generate the covariates \(\bx_{ij} = (1, x_{ij,1}, x_{ij,2})^\top\), where \((x_{ij,1}, x_{ij,2})^\top\) follows a bivariate Gaussian distribution with mean \(\boldsymbol{0}\), marginal variance 1, and correlation 0.5.  
The correlation covariates are constructed as  
\[
\bw_{ijk} = \big(1, u_{ij} - u_{ik}, (u_{ij} - u_{ik})^2 \big)^\top,
\]  
where \(u_{ij} \sim \operatorname{U}(0,1)\) are independent draws.  
We specify the linear predictors as  
\begin{align*}
    \theta_{ij} &= \beta_0 + x_{ij,1} \beta_1 + x_{ij,2} \beta_2, 
    && \forall\; 1 \leqslant i \leqslant n,\; 1 \leqslant j \leqslant m_i,\\
    \gamma_{ijk} &= \alpha_0 + w_{ijk,1} \alpha_1 + w_{ijk,2} \alpha_2, 
    && \forall\; 1 \leqslant i \leqslant n,\; 1 \leqslant k < j \leqslant m_i.
\end{align*}

Since \(a'(x) = x\) and \(h(x) = x\) for the Gaussian family, the canonical link function is the identity link, \(\mu(x) = x\). Thus, given \(\{\theta_{ij}\}\) and \(\{\gamma_{ijk}\}\), the cluster-level response \(\by_i\) follows a multivariate Gaussian distribution with the corresponding mean and covariance matrix determined by the model.  
The true parameter values are fixed as  
\[
\bbeta_0 = (1.0, -0.5, 0.5)^\top, 
\quad \balpha_0 = (0.2, -0.2, 0.3)^\top, 
\quad \phi_0 = 1.
\] 

Table~\ref{table:study_1_Gaussian} reports the results for the Gaussian case. The upper panel shows MAE and standard errors, while the lower panel reports empirical coverage. 
\begin{table}[!htbp]
    \centering
    \resizebox{\textwidth}{!}{
        \begin{tabular}{cccccccc}
            \toprule
            \multicolumn{8}{c}{\textbf{Gaussian Distribution}}\\
            Size & $\alpha_0$ & $\alpha_1$ & $\alpha_2$ & $\beta_0$ & $\beta_1$ & $\beta_2$ & $\phi$\\
            \midrule
            $n=100$ & 2.64(3.37) & 4.61(5.86) & 9.41(11.90) & 5.45(6.89) & 3.16(3.99) & 3.02(3.78) & 6.43(8.05)\\
            $n=200$ & 1.92(2.38) & 3.36(4.28) & 6.69(8.38) & 3.87(4.81) & 2.23(2.78) & 2.19(2.75) & 4.42(5.49)\\ 
            $n=400$ & 1.27(1.60) & 2.32(2.90) & 4.73(5.97) & 2.73(3.42) & 1.57(1.95) & 1.56(1.92) & 3.14(3.88)\\
            \midrule
            $n=100$ & 90.9\% & 93.4\% & 94.9\% & 94.5\% & 94.7\% & 95.7\% & --\\
            $n=200$ & 91.8\% & 92.7\% & 93.4\% & 94.6\% & 95.6\% & 94.9\% & -- \\ 
            $n=400$ & 92.3\% & 93.5\% & 94.8\% & 94.2\% & 95.8\% & 95.9\% & -- \\
            \bottomrule
        \end{tabular}
    }
    \caption{Simulation results for the Gaussian distribution in Study~1. The upper panel shows mean absolute errors (standard errors in parentheses); the lower panel reports empirical coverage of nominal \(95\%\) confidence intervals. All values are multiplied by 100.}
    \label{table:study_1_Gaussian}
\end{table}
As expected, the MAE decreases steadily as \(n\) increases, demonstrating consistency of the estimators. The coverage probabilities for \(\bbeta\) remain close to the nominal level, while those for \(\balpha\) show slight under-coverage for small \(n\), which improves with larger sample sizes. The scale parameter \(\phi\) is estimated reasonably well, with decreasing variability as \(n\) grows.  

\medskip
\noindent{\textbf{Poisson data generation.}}
For the Poisson distribution, we adopt the data-generating algorithm proposed by \cite{Yahav:2012}, which is specifically designed to handle correlated count data. The covariates \(\bx_{ij}\) and \(\bw_{ijk}\), as well as the linear predictors \(\theta_{ij}\) and \(\gamma_{ijk}\), are specified in the same way as in the Gaussian case.  

For the Poisson family, we have \(a'(x) = e^x\) and \(h(x) = x\), which implies that the canonical mean function is the exponential link, \(\mu(x) = e^x\). The true parameters are set to  
\[
\bbeta_0 = (1.0, -0.5, 0.5)^\top, \quad 
\balpha_0 = (0.2, -0.2, 0.3)^\top.
\]  
Unlike the Gaussian case, the Poisson variance is fully determined by the mean, so the nuisance scale parameter \(\phi\) is fixed at \(\phi = 1\) and does not require estimation.  

A technical point is that the Yahav-Shmueli algorithm does not guarantee feasibility for arbitrary correlation matrices. In practice, if the generated correlation matrix fails to satisfy the algorithm's constraints (e.g., non-positive definiteness or incompatibility with Poisson marginal moments), we simply repeat the generation until a feasible sample is obtained. This ensures all simulated datasets are valid.  

The Monte Carlo performance metrics for the Poisson case are summarized in Table~\ref{table:study_1_Poisson}. As before, the upper panel reports mean absolute errors (MAE) with Monte Carlo standard errors in parentheses, while the lower panel shows empirical coverage probabilities of the nominal 95\% confidence intervals.  

\begin{table}[!htbp]
    \centering
    \begin{tabular}{ccccccc}
        \toprule
        \multicolumn{7}{c}{\textbf{Poisson Distribution}}\\
        Size & $\alpha_0$ & $\alpha_1$ & $\alpha_2$ & $\beta_0$ & $\beta_1$ & $\beta_2$\\
        \midrule
        $n=100$ & 2.48(3.08) & 4.93(6.17) & 10.64(13.10) & 3.48(4.40) & 1.95(2.44) & 1.95(2.43) \\
        $n=200$ & 1.82(2.31) & 3.66(4.46) & 7.36(9.15) & 2.30(2.90) & 1.38(1.73) & 1.31(1.62) \\ 
        $n=400$ & 1.23(1.53) & 2.49(3.10) & 5.03(6.26) & 1.74(2.21) & 1.01(1.26) & 0.95(1.19) \\
        \midrule
        $n=100$ & 93.9\% & 92.3\% & 91.6\% & 93.8\% & 95.1\% & 94.4\% \\
        $n=200$ & 92.0\% & 93.3\% & 92.3\% & 95.1\% & 94.5\% & 96.0\% \\ 
        $n=400$ & 95.5\% & 93.8\% & 94.2\% & 93.9\% & 93.8\% & 95.2\% \\
        \bottomrule
    \end{tabular}
    \caption{Simulation results for the Poisson distribution in Study~1. The upper panel shows mean absolute errors (standard errors in parentheses); the lower panel reports empirical coverage of nominal \(95\%\) confidence intervals. All values are multiplied by 100.}
    \label{table:study_1_Poisson}
\end{table}
The overall behavior mirrors that of the Gaussian case. MAE decreases steadily as \(n\) increases, confirming the consistency of the estimators. Compared to the Gaussian setting, the estimation of the correlation parameters \(\balpha\) exhibits slightly larger MAE for small \(n\), which is expected due to the additional complexity of correlated count data.  

Coverage probabilities remain close to the nominal level for the regression coefficients \(\bbeta\), while being slightly under-coverage for \(\balpha\), particularly for small \(n\). As \(n\) increases, coverage for all parameters improves and approaches the desired level.  

These findings suggest that the proposed estimation procedure remains robust when applied to non-Gaussian exponential family responses such as Poisson counts, even when the feasible correlation structure imposes mild additional constraints on data generating process.

\medskip
\noindent{\textbf{Bernoulli data generation.}}
For the Bernoulli distribution, the constraints on valid correlation matrices are even more stringent than in the Poisson case. To handle this, we generate binary responses under a block-structured correlation model, which also serves as an example for the motivating dataset.  

We adopt the generation algorithm proposed by \cite{Emrich:1991}, which is specifically designed for multivariate binary data with a given correlation structure. The covariates \(\bx_{ij}\) remain the same as in the Gaussian and Poisson cases. However, to induce a hierarchical clustering structure, we define  
\[
\bw_{ijk} = (1, \mathbb{I}(v_{ij} = v_{ik}))^\top, 
\]  
where \(v_{ij} \sim \mathrm{Bin}(1, 0.5)\) indicates the subgroup membership of the \(j\)th subject in the \(i\)th group. This setup introduces two correlation levels: a baseline within-group correlation and an additional within-subgroup correlation.  

The linear predictors for the marginal mean and the correlation model are specified as  
\[
\theta_{ij} = \beta_0 + \beta_1 x_{ij,1} + \beta_2 x_{ij,2}, 
\quad 
\gamma_{ijk} = \alpha_0 + \alpha_1 w_{ijk,1}, 
\]  
with the logistic link function \(\mu(x) = \frac{e^x}{1+e^x}\).  
We set the true parameters to  
\[
\bbeta_0 = (1.0, -0.5, 0.5)^\top, \quad 
\balpha_0 = (0.05, 0.15)^\top.
\]  
As in the Poisson case, the Bernoulli distribution has no overdispersion parameter, so \(\phi = 1\).  

It is important to note that the feasible range of correlations between two Bernoulli random variables is constrained and does not span the entire \([-1,1]\) interval. Therefore, if a sampled correlation matrix fails feasibility checks (e.g., it is not positive definite or violates the Fr\'{e}chet bounds), the data generation is repeated until a valid sample is obtained.  

Table~\ref{table:study_1_Bernoulli} summarizes the simulation results. The upper section reports mean absolute errors (MAE) with Monte Carlo standard errors in parentheses, while the lower section shows the empirical coverage probabilities of nominal 95\% confidence intervals.  

\begin{table}[!htbp]
    \centering
    \begin{tabular}{cccccc}
        \toprule
        \multicolumn{6}{c}{\textbf{Bernoulli Distribution}}\\
        Size & $\alpha_0$ & $\alpha_1$ & $\beta_0$ & $\beta_1$ & $\beta_2$\\
        \midrule
        $n=100$ & 3.28(4.06) & 4.74(5.94) & 10.76(13.49) & 8.78(11.14) & 8.87(11.09) \\
        $n=200$ & 2.09(2.59) & 3.13(3.95) & 7.10(8.81) & 6.20(7.69) & 5.93(7.42) \\ 
        $n=400$ & 1.53(1.93) & 2.17(2.71) & 5.12(6.52) & 4.47(5.59) & 4.39(5.41) \\
        \midrule
        $n=100$ & 92.3\% & 93.4\% & 93.7\% & 94.4\% & 94.6\% \\
        $n=200$ & 95.8\% & 93.9\% & 95.7\% & 95.5\% & 95.5\% \\ 
        $n=400$ & 94.1\% & 95.9\% & 94.0\% & 94.1\% & 95.0\% \\ 
        \bottomrule
    \end{tabular}
    \caption{Simulation results for the Bernoulli distribution in Study~1. The upper panel shows mean absolute errors (standard errors in parentheses); the lower panel reports empirical coverage of nominal \(95\%\) confidence intervals. All values are multiplied by 100.}
    \label{table:study_1_Bernoulli}
\end{table}
Compared to the Gaussian and Poisson cases, the Bernoulli setting is more challenging due to the stricter correlation constraints and the binary nature of the responses. As expected, MAE and variability are slightly larger, particularly for the parameters \(\bbeta\) and for small \(n\). Nonetheless, the overall trends remain favorable:  (i) MAE decreases steadily with larger sample sizes, reflecting the asymptotic consistency of the estimators.  
(ii) The coverage probabilities for \(\bbeta\) are close to the nominal level, while those for \(\balpha\) are slightly under-coverage at small \(n\), improving with larger samples.  

These results demonstrate that the proposed method maintains reasonable performance even for Bernoulli responses with complex block correlation structures, thereby reinforcing its robustness across different members of the exponential family.  

\medskip
\noindent{\textbf{Summary across distributions.}} 
Taken together, the three simulation studies show that the proposed method performs reliably across Gaussian, Poisson, and Bernoulli responses. While binary data impose additional constraints and tend to exhibit slightly higher estimation error, the overall trends of decreasing MAE, stable coverage, and improved precision with larger \(n\) are consistent with the theoretical asymptotics established in Section~\ref{sec:theory}.  

\subsection{Comparison with Competitors}  \label{sec:comparison}

We now compare our proposed method with the classical Generalized Estimating Equations (GEE) approach using exchangeable and AR(1) working correlation structures, implemented via the R package \texttt{gee}.  

We consider binary outcomes with a logistic link function \(\mu(x) = \frac{e^x}{1+e^x}\), so that \(a'(x) = \frac{e^x}{1+e^x}\) and \(h(x)=x\). The group size is set as \(m_i \sim \mathrm{Bin}(5,0.8)+2\), while the covariate vector \(\bx_{ij}\) and the linear predictor \(\theta_{ij}\) follow the same setup as in Section~\ref{sec:study1}.  

To ensure a fair comparison, unlike Section \ref{sec:study1}, where the correlation matrices are generated under our proposed model, here we explicitly construct the correlation matrices under different pre-specified structures to mimic clustered and longitudinal data scenarios. This allows us to assess the robustness of each method when the true correlation deviates from the assumed working structure.  

We investigate four cases: two involving hierarchical clustering (Cases~1-2) and two involving longitudinal-type AR(1) correlation (Cases~3-4).  

\medskip
\noindent{\textbf{Case 1: Two-level hierarchical clustering.}}
In this setting, the data exhibit a two-level cluster structure similar to Section \ref{sec:study1}. Each subject within a group belongs to one of two subgroups, indicated by  
\[
u_{ij} \overset{\text{i.i.d.}}{\sim} \mathrm{Bin}(1,0.5).  
\]  
The correlation between subjects \(j\) and \(k\) in group \(i\) is defined as  
\[
\mathrm{corr}(y_{ij}, y_{ik}) = 0.05 \;+\; 0.15 \, \mathbb{I}(u_{ij}=u_{ik}=0) \;+\; 0.2 \, \mathbb{I}(u_{ij}=u_{ik}=1).  
\]  
Thus, subjects in the same subgroup have stronger correlation, and the two subgroups differ slightly in their within-subgroup strength. For our method, we set  
\[
\bw_{ijk} = \big(1, \mathbb{I}(u_{ij} = u_{ik})\big)^\top.  
\]

\medskip
\noindent{\textbf{Case 2: Covariate-modulated clustering.}}
Here we extend Case~1 by incorporating subject-level covariates that further modulate the correlation. Each subject has an additional discrete covariate \(v_{ij} \overset{\text{i.i.d.}}{\sim} \mathrm{Bin}(4,0.5)\). The correlation between two subjects is now  
\[
\begin{aligned}
\mathrm{corr}(y_{ij}, y_{ik}) &= 0.05 \;+\; 0.15 \, \mathbb{I}(u_{ij}=u_{ik}=0) \;+\; 0.2 \, \mathbb{I}(u_{ij}=u_{ik}=1) \\
&\quad -0.05 |x_{ij,1}-x_{ik,1}| -0.05 |x_{ij,2}-x_{ik,2}| -0.05 |v_{ij}-v_{ik}|.  
\end{aligned}
\]  
This structure reflects that larger differences in covariates lead to weaker correlation between subjects. For our method, we enrich the working vector as  
\[
\bw_{ijk} = \big(1, \mathbb{I}(u_{ij}=u_{ik}),\, |x_{ij,1}-x_{ik,1}|,\, |x_{ij,2}-x_{ik,2}|,\, |v_{ij}-v_{ik}|\big)^\top.  
\]

\medskip
\noindent{\textbf{Case 3: AR(1) longitudinal correlation.}}
In this case, the data mimic longitudinal measurements with autoregressive correlation. The correlation between subjects \(j\) and \(k\) in group \(i\) depends on the time-lag \(|j-k|\):  
\[
\mathrm{corr}(y_{ij}, y_{ik}) = 0.4 \, \rho^{|j-k|}, \quad \rho = 0.6.  
\]  
Here, the subject indices \(j\) and \(k\) can be interpreted as observation times, with correlation decaying exponentially as the time gap increases. For our method, the working covariate vector is simply  
\[
\bw_{ijk} = \big(1, |j-k|\big)^\top.  
\]

\medskip
\noindent{\textbf{Case 4: Covariate-modulated AR(1) correlation.}}
Finally, we combine the AR(1) temporal decay with the covariate effects as in Case~2. Thus,  
\[
\begin{aligned}
\mathrm{corr}(y_{ij}, y_{ik}) &= 0.4 \, \rho^{|j-k|} \;-\;0.05 |x_{ij,1}-x_{ik,1}| -0.05 |x_{ij,2}-x_{ik,2}| -0.05 |v_{ij}-v_{ik}|, \\
\rho &= 0.6.
\end{aligned}
\]  
This represents a realistic longitudinal scenario where correlation decays over time but is also weakened when subjects differ substantially in key covariates. For our method, we adopt  
\[
\bw_{ijk} = \big(1, |j-k|,\, |x_{ij,1}-x_{ik,1}|,\, |x_{ij,2}-x_{ik,2}|,\, |v_{ij}-v_{ik}|\big)^\top.  
\]

\medskip
\noindent{\textbf{Remarks.}}
These four scenarios cover a range of realistic data structures: (i) pure clustering, (ii) clustering with covariate modulation, (iii) pure temporal AR(1) dependence, and (iv) AR(1) dependence combined with covariate effects.  

For each case, we will evaluate estimation accuracy and inference performance of our proposed method versus the GEE with exchangeable and AR(1) working correlation assumptions. This comparison highlights the advantages of explicitly modeling flexible pairwise correlation structures versus using fixed-form working correlations.  

For performance comparison, we evaluate each method using two error measures that quantify how well the mean and covariance structures are recovered:  
\[
\mathrm{MMD} = \frac{1}{n} \sum_{i=1}^n \|\hat{\bmu}_i - \bmu_{0i}\|,  
\quad
\mathrm{MCD} = \frac{1}{n} \sum_{i=1}^n \|\hat{\bSigma}_i - \bSigma_{0i}\|,
\]
where \(\|\cdot\|\) denotes the \(\ell_2\)-norm for the mean error and the Frobenius norm for the covariance error.  Here MMD (Mean Mean Deviation) measures the average discrepancy between the estimated marginal means \(\hat{\bmu}_i\) and their true values \(\bmu_{0i}\), while  MCD (Mean Covariance Deviation) measures the average discrepancy between the estimated covariance matrices \(\hat{\bSigma}_i\) and the true \(\bSigma_{0i}\).  A smaller value indicates a more accurate recovery of the underlying structure.  

We fix the true regression coefficients as \(\bbeta_0 = (1.0,\,-0.5,\,0.5)^\top\), while the correlation structure varies across the four cases described earlier. We consider three sample sizes, \(n = 100, 200, 400\), and each scenario is repeated 1000 times.  

For each case, we report the mean MMD and mean MCD, along with their corresponding standard errors (all values are multiplied by 100 for readability). In each setting, the smallest mean MMD or MCD among all competing methods is \textbf{underlined}.  The detailed results are presented in Table~\ref{table:study_2}.  

\begin{table}[!htbp]
    \centering
    \resizebox*{1\textwidth}{!}{
        \begin{tabular}{cccccccc}
            \toprule
                \multirow{2}{*}{\textbf{Case}} & 
                \multirow{2}{*}{\textbf{Size}} & 
                \multicolumn{2}{c}{\textbf{Proposed Method}} & 
                \multicolumn{2}{c}{\textbf{GEE (Exch)}} & 
                \multicolumn{2}{c}{\textbf{GEE (AR-1)}} \\
                \cmidrule(lr){3-4} \cmidrule(lr){5-6} \cmidrule(lr){7-8} &
                & \textbf{MMD} & \textbf{MCD} & \textbf{MMD} & \textbf{MCD} & \textbf{MMD} & \textbf{MCD} \\
            \midrule
                \multirow{3}{*}{Case 1} & $n=100$ & \underline{7.27}(3.11) & \underline{6.53}(2.26) & 7.34(3.15) & 10.74(1.22) & 7.48(3.17) & 15.07(0.80) \\
                 & $n=200$ & \underline{5.22}(2.36) & \underline{4.96}(1.57) & 5.29(2.39) & 10.10(0.77) & 5.37(2.42) & 14.72(0.50) \\
                 & $n=400$ & \underline{3.71}(1.58) & \underline{3.89}(1.02) & 3.77(1.59) & 9.73(0.41) & 3.81(1.60) & 14.53(0.33) \\
            \midrule
                \multirow{3}{*}{Case 2} & $n=100$ & \underline{6.72}(2.86) & \underline{7.18}(1.89) & 6.87(2.90) & 12.62(0.59) & 6.87(2.91) & 12.76(0.62) \\
                 & $n=200$ & \underline{4.72}(2.00) & \underline{5.27}(1.27) & 4.84(2.04) & 12.30(0.32) & 4.85(2.04) & 12.42(0.34) \\
                 & $n=400$ & \underline{3.42}(1.45) & \underline{4.06}(0.86) & 3.52(1.51) & 12.14(0.20) & 3.52(1.52) & 12.24(0.20) \\
            \midrule
                \multirow{3}{*}{Case 3} & $n=100$ & \underline{7.32}(3.24) & \underline{6.80}(2.27) & 7.41(3.25) & 9.89(1.53) & 7.33(3.24) & 7.74(1.64) \\
                 & $n=200$ & \underline{5.21}(2.35) & \underline{5.14}(1.41) & 5.23(2.37) & 8.95(0.80) & 5.22(2.34) & 6.94(1.05) \\
                 & $n=400$ & \underline{3.74}(1.71) & \underline{4.16}(0.86) & 3.77(1.73) & 8.51(0.42) & 3.74(1.70) & 6.55(0.72) \\
            \midrule
                \multirow{3}{*}{Case 4} & $n=100$ & \underline{6.81}(2.86) & \underline{7.19}(1.71) & 6.94(2.94) & 11.75(0.56) & 6.96(2.93) & 10.78(0.75) \\
                 & $n=200$ & \underline{4.89}(2.11) & \underline{5.42}(1.14) & 4.99(2.15) & 11.43(0.34) & 5.00(2.15) & 10.38(0.47)\\
                 & $n=400$ & \underline{3.46}(1.45) & \underline{4.24}(0.70) & 3.51(1.49) & 11.22(0.20) & 3.51(1.49) & 10.11(0.25) \\
            \bottomrule
        \end{tabular}
    }
    \caption{Simulation results comparing our method and GEE.}
    \label{table:study_2}
\end{table}
It is notable that the proposed method consistently outperforms the two GEE competitors across all cases and sample sizes, both in terms of estimating the marginal means and recovering the covariance matrix. In particular, the advantage is most pronounced for covariance estimation, where incorporating covariate-driven correlations allows the proposed model to capture complex dependence structures that traditional working correlations (exchangeable or AR-1) fail to approximate well. Moreover, when influential covariates are present, our method also achieves clear gains in mean estimation accuracy, highlighting its adaptability to heterogeneous correlation patterns.

Beyond improved estimation accuracy, an important strength of the proposed method is that it yields direct estimates and valid inference for the correlation parameters 
$\balpha$. This enables a deeper understanding of how covariates affect within-group correlations, an insight that is generally unattainable under standard GEE frameworks with fixed working structures. Overall, these results validate the effectiveness and interpretability of the proposed joint mean-covariance estimation approach.

\section{Cross-validation Setup}

We implement a 5-fold (stratified) cross-validation procedure repeated 15 times and compute two commonly used metrics: the Brier Score \citep{Brier:1950} and the Log Loss, defined as
\begin{align*}
    \text{Brier Score} &= \frac{1}{N} \sum_{i=1}^n \sum_{j=1}^{m_i} (y_{ij} - \hat{p}_{ij})^2,\\
    \text{Log Loss} &= -\frac{1}{N} \sum_{i=1}^n \sum_{j=1}^{m_i} \left[ y_{ij}\log \hat{p}_{ij} + (1-y_{ij})\log(1-\hat{p}_{ij}) \right],
\end{align*}
where $n$ is the number of clusters, $m_i$ the size of cluster $i$, $N$ the total number of observations, $y_{ij}$ the observed outcome, and $\hat{p}_{ij}$ the predicted probability.  

The Brier Score captures overall predictive accuracy as the mean squared error between predicted probabilities and observed outcomes, while the Log Loss emphasizes the quality of probabilistic predictions by heavily penalizing overconfident misclassifications.  
For both metrics, smaller values indicate better predictive performance.

For each repetition, the performance is averaged over the 5 folds:
\begin{gather*}
    \bar{s}_{r,1} = \frac{1}{5} \sum_{k=1}^{5} \text{Brier Score}_{r,k}, \quad
    \bar{s}_{r,2} = \frac{1}{5} \sum_{k=1}^{5} \text{Log Loss}_{r,k},
\end{gather*}
and the overall cross-validated score is computed as
\begin{equation*}
    \text{CV}_l = \frac{1}{15} \sum_{r=1}^{15} \bar{s}_{r,l}, \quad l = 1,2,
\end{equation*}
where $r$ indexes repetitions and $k$ indexes folds.

\section{Additional Analysis of Modern Prenatal Care Data}

\subsection{Analysis of the Full Correlation Models}

We extend the initial baseline correlation model to incorporate socio-demographic and household-level covariates relevant to prenatal care choices:
\begin{align}\label{eq:corr_family_full}
    \gamma_{ijk} = & \ \alpha_0 
    + \alpha_1\, w_{ijk,1} 
    + \sum_{l=0}^2 \alpha_{2+l}\, w_{ijk,2+l} 
    + \sum_{l=0}^4 \alpha_{5+l}\, w_{ijk,5+l} 
    + \alpha_{10}\, w_{ijk,10} 
    + \alpha_{11}\, w_{ijk,11} \notag\\
    & + \alpha_{12}\, w_{ijk,12}
    + \alpha_{13}\, w_{ijk,13},
\end{align}
where \(i\) index communities and \(j,k\) index records,.  

The additional covariates are defined as:
\begin{itemize}
    \item \(w_{ijk,2+l} = \mathbb{I}(\text{indig}_{ij} = \text{indig}_{ik} = l)\), for \(l = 0,1,2\), indicating shared ethnic-linguistic identity;
    \item \(w_{ijk,5+l} = \mathbb{I}(\text{husEmpl}_{ij} = \text{husEmpl}_{ik} = l)\), for \(l = 0,1,2,3,4\), indicating shared employment status of the husband;
    \item \(w_{ijk,10} = |\text{momEd}_{ij} - \text{momEd}_{ik}|\), the absolute difference in the mother’s education level;
    \item \(w_{ijk,11} = |\text{husEd}_{ij} - \text{husEd}_{ik}|\), the absolute difference in the husband’s education level;
    \item \(w_{ijk,12} = |\text{toilet}_{ij} - \text{toilet}_{ik}|\), the absolute difference in modern toilet usage;
    \item \(w_{ijk,13} = |\text{TV}_{ij} - \text{TV}_{ik}|\), the absolute difference in TV viewing frequency.
\end{itemize}

To further assess the contribution of generic familial clustering, we also estimate a reduced correlation model that excludes the family-level indicator \(w_{ijk,1}\).  
This specification allows the effects of specific family-level covariates to be evaluated in isolation:
\begin{align}\label{eq:corr_family_full_reduced}
    \gamma_{ijk} = & \ \alpha_0  
    + \sum_{l=0}^2 \alpha_{2+l}\, w_{ijk,2+l} 
    + \sum_{l=0}^4 \alpha_{5+l}\, w_{ijk,5+l}
    + \alpha_{10}\, w_{ijk,10} 
    + \alpha_{11}\, w_{ijk,11} \notag\\
    & + \alpha_{12}\, w_{ijk,12}
    + \alpha_{13}\, w_{ijk,13},
\end{align}
where the notation follows that of~\eqref{eq:corr_family_full}.  

The parameter estimates and corresponding $p$-values for models~\eqref{eq:corr_family_full} and~\eqref{eq:corr_family_full_reduced} are summarized in Table~\ref{table:corr_family_full}, facilitating comparison of the influence of generic versus covariate-specific correlation components.

\begin{table}[!htbp]
    \centering
    \resizebox*{1\textwidth}{!}{
        \sisetup{
            table-format = -1.4,
            table-space-text-post = {***},
            table-align-text-post = false,
            table-number-alignment = center,
        }
        \begin{tabular}{@{} c 
            S[table-format=-1.4, table-column-width=3cm]
            S[table-format=1.4, table-column-width=2cm]
            S[table-format=-1.4, table-column-width=3cm]
            S[table-format=1.4, table-column-width=2cm]
            @{}}
            \toprule
                \multirow{2}{*}{\textbf{Variable}} & 
                \multicolumn{2}{c}{\textbf{Model \eqref{eq:corr_family_full}}} & 
                \multicolumn{2}{c}{\textbf{Model \eqref{eq:corr_family_full_reduced}}} \\
            \cmidrule(lr){2-3} \cmidrule(lr){4-5}
                & \textbf{Estimate} & \textbf{P-value} & \textbf{Estimate} & \textbf{P-value} \\
            \midrule
                Intercept ($\alpha_0$) & 0.0471\textbf{.} & 0.0937 & 0.0739* & 0.0164 \\
                mom ($\alpha_1$) & 0.8266*** & 0.0000 & \text{--} & \text{--} \\
                \hdashline
                Ladino ($\alpha_2$) & -0.0187 & 0.5628 & 0.0203 & 0.4893 \\
                Indigenous non-Spanish ($\alpha_3$) & -0.0095 & 0.8236 & -0.0253 & 0.5191 \\
                Indigenous Spanish ($\alpha_4$) & -0.0617\textbf{.} & 0.0799 & -0.0498 & 0.1341 \\
                \hdashline
                Unskilled ($\alpha_5$) & 0.3592 & 0.3001 & 0.8454*** & 0.0000 \\
                Professional ($\alpha_6$) & 0.0756 & 0.7874 & 0.2930** & 0.0013 \\
                Agri-self ($\alpha_7$) & 0.0331\textbf{.} & 0.0823 & 0.0438* & 0.0102 \\
                Agri-empl ($\alpha_8$) & 0.0115 & 0.6601 & 0.0541*** & 0.0006 \\
                Skilled ($\alpha_9$) & 0.0629 & 0.3018 & 0.1842*** & 0.0000 \\
                \hdashline
                momEd ($\alpha_{10}$) & 0.0066 & 0.6704 & -0.0279* & 0.0413 \\
                husEd ($\alpha_{11}$) & 0.0041 & 0.7744 & -0.0352** & 0.0023 \\
                \hdashline
                toilet ($\alpha_{12}$) & 0.0073 & 0.8331 & -0.0413 & 0.2276 \\
                TV ($\alpha_{13}$) & -0.0004 & 0.9809 & -0.0186 & 0.2135 \\
            \bottomrule
        \end{tabular}
    }

    \medskip
    \flushleft{
        \small 
        Notes: \textbf{.} for $\text{p} < 0.1$, * for $\text{p} < 0.05$, ** for $\text{p} < 0.01$ and *** for $\text{p} < 0.001$.
    }
    
    \caption{Parameter estimates and corresponding p-values for \eqref{eq:corr_family_full} and \eqref{eq:corr_family_full_reduced}.}
    \label{table:corr_family_full}
\end{table}
In Model~\eqref{eq:corr_family_full}, the hierarchical clustering effects \(\hat{\alpha}_0\) and \(\hat{\alpha}_1\) remain significant, while most additional family-level covariates are insignificant with near-zero effects, likely because the strong generic familial clustering \(w_{ijk,1}\) captures most of their potential variation.
Similarly, the two exceptions are \(\hat{\alpha}_4\) (Indigenous Spanish ethnic-linguistic group) and \(\hat{\alpha}_7\) (husband agri-self employment), both marginally significant (p $<0.1$), indicating additional dependence beyond generic family- and community-level effects. 

For Model~\eqref{eq:corr_family_full_reduced}, which omits the generic familial clustering term \(w_{ijk,1}\), several notable patterns emerge:

\begin{enumerate}
    \item \textbf{Effect of removing \(w_{ijk,1}\):} Without the generic familial clustering term, parameters for the family-level covariates generally have larger magnitudes and lower p-values compared to Model~\eqref{eq:corr_family_full}. This confirms that the model effectively partitions overall familial clustering into more detailed subcomponents.
    
    \item \textbf{Ethnic-linguistic identity:} The estimate for \(\hat{\alpha}_2\) (Ladino) is positive, whereas \(\hat{\alpha}_3\) and \(\hat{\alpha}_4\) (indigenous groups) are negative. While these signs suggest potential positive clustering within Ladino families and negative clustering within indigenous populations, none of these effects are statistically significant, indicating that ethnic-linguistic clustering is weak in this context.
    
    \item \textbf{Husband’s employment status:} All five employment-related parameters are significantly positive, implying that mothers whose partners share the same employment category tend to have similar prenatal care patterns. The smallest effect is observed for the agri-self category (\(\hat{\alpha}_7\)), while the largest is for the unskilled category (\(\hat{\alpha}_5\)), reflecting the strongest clustering.
    
    \item \textbf{Parental education differences:} Both \(\hat{\alpha}_{10}\) (maternal education) and \(\hat{\alpha}_{11}\) (paternal education) are significantly negative, indicating that smaller differences in parental education are associated with higher similarity in prenatal care, even after accounting for other covariates.
    
    \item \textbf{Household amenities:} The coefficients \(\hat{\alpha}_{12}\) (toilet usage) and \(\hat{\alpha}_{13}\) (TV viewing) are negative but not statistically significant, suggesting that these amenities have little impact on the correlation structure.
\end{enumerate}

These results reveal insights often obscured by conventional models that capture only generic clustering. While community- and family-level dependence affects prenatal care choices, the underlying social and structural drivers typically remain hidden. Model~\eqref{eq:corr_family_full_reduced} uncovers nuanced correlation patterns linked to social, ethnic, socio-demographic, and economic factors, highlighting mechanisms behind observed clustering and suggesting targets for health interventions or further investigation.

To examine individual-level factors that may influence correlations in prenatal care choices more closely, we extend Model~\eqref{eq:corr_family_full} by incorporating pairwise differences in selected individual covariates:
\begin{align}\label{eq:corr_full}
    \gamma_{ijk} = \ & \alpha_0 + \alpha_1 w_{ijk,1} 
    + \sum_{l=0}^2 \alpha_{2+l} w_{ijk,2+l} 
    + \sum_{l=0}^4 \alpha_{5+l} w_{ijk,5+l} 
    + \alpha_{10} w_{ijk,10} + \alpha_{11} w_{ijk,11} \notag \\
    & + \alpha_{12} w_{ijk,12} + \alpha_{13} w_{ijk,13} 
    + \alpha_{14} w_{ijk,14} + \alpha_{15} w_{ijk,15} + \alpha_{16} w_{ijk,16},
\end{align}
where $i,j,k$ follow the same indexing as in~\eqref{eq:corr_family_full}.  
The additional covariates are defined as
\[
w_{ijk,14} = |\mathrm{motherAge}_{ij} - \mathrm{motherAge}_{ik}|,\quad
w_{ijk,15} = |\mathrm{childAge}_{ij} - \mathrm{childAge}_{ik}|,\]
\[w_{ijk,16} = |\mathrm{birthOrd}_{ij} - \mathrm{birthOrd}_{ik}|,
\]
representing pairwise differences in maternal age, child age, and birth order, respectively.  
All other covariates remain as specified in~\eqref{eq:corr_family_full}.  
The corresponding parameter estimates and p-values are summarized in Table~\ref{table:corr_full}.

\begin{table}[!h]
    \centering
    \sisetup{
        table-format = -1.4,
        table-space-text-post = {***},
        table-align-text-post = false,
        table-number-alignment = center,
    }
    \begin{tabular}{@{} c 
        S[table-format=-1.4, table-column-width=6cm]
        S[table-format=1.4, table-column-width=3cm]
        @{}}
    \toprule
    \textbf{Variable} & \textbf{Estimate} & \textbf{P-value}\\
    \midrule
    Intercept ($\alpha_0$) & 0.0683* & 0.0283 \\
    mom ($\alpha_1$) & 0.8269*** & 0.0000 \\
    \hdashline
    Ladino ($\alpha_2$) & -0.0172 & 0.6092 \\
    Indigenous non-Spanish ($\alpha_3$) & -0.0034 & 0.9389 \\
    Indigenous Spanish ($\alpha_4$) & -0.0648\textbf{.} & 0.0792 \\
    \hdashline
    Unskilled ($\alpha_5$) & 0.3768 & 0.2653 \\
    Professional ($\alpha_6$) & 0.0944 & 0.7185 \\
    Agri-self ($\alpha_7$) & 0.0293 & 0.1467 \\
    Agri-empl ($\alpha_8$) & 0.0166 & 0.5786 \\
    Skilled ($\alpha_9$) & 0.0578 & 0.3209 \\
    \hdashline
    momEd ($\alpha_{10}$) & 0.0095 & 0.5445 \\
    husEd ($\alpha_{11}$) & 0.0027 & 0.8541 \\
    \hdashline
    toilet ($\alpha_{12}$) & 0.0061 & 0.8722 \\
    TV ($\alpha_{13}$) &  0.0021 & 0.8981 \\
    \hdashline
    motherAge ($\alpha_{14}$) & 0.0072 & 0.6238 \\
    childAge ($\alpha_{15}$) & -0.0251 & 0.1906 \\
    birthOrd ($\alpha_{16}$) & -0.0130 & 0.3412 \\
    \bottomrule
    \end{tabular}

    \medskip
    \flushleft{
        \small 
        Notes: \textbf{.} for $\text{p} < 0.1$, * for $\text{p} < 0.05$, ** for $\text{p} < 0.01$ and *** for $\text{p} < 0.001$.
    }
    
    \caption{Parameter estimates and corresponding p-values for \eqref{eq:corr_full}.}
    \label{table:corr_full}
\end{table}
Analysis of Model~\eqref{eq:corr_full} yields results broadly consistent with our earlier findings.  
Although none of the additional individual-level covariates reach statistical significance, both $\hat{\alpha}_{15}$ and $\hat{\alpha}_{16}$ are estimated to be negative. This suggests a gradual weakening of shared prenatal care patterns over time and reduced similarity in care choices among children with larger age differences in larger families. Such patterns may reflect temporal changes in access, preferences, or health policies, as well as generational differences within multi-child households. However, given the limited data for these specific covariates, these observations should be interpreted cautiously, and further investigation is warranted before drawing definitive conclusions.  

Overall, our comprehensive analysis confirms that prenatal care choices exhibit strong dependence at both the family and community levels, with familial clustering being the dominant factor. By extending the correlation model to incorporate socio-demographic, ethnic, and economic covariates, our approach reveals additional influences---such as ethnic-linguistic identity, husband’s employment type, and parental education similarity---that standard clustering models may obscure.  
These findings offer a more nuanced understanding of the social and cultural mechanisms shaping care patterns and provide actionable insights for targeted public health interventions and policy design.  
The flexibility of our model, which allows general clustering to be decomposed into interpretable, covariate-driven components, makes it a valuable tool for applied analyses aiming to understand the drivers of correlated behaviors.
\subsection{Comparisons with GLMM}

We assess the performance of the GLMM as a main competing approach in this context. 
In addition to the baseline community- and family-level effects, we explore extensions incorporating additional hierarchical random effects. 
The GLMM is initially fitted using the \texttt{lme4} package \citep{Bates:2015}. 
For more complex specifications, where convergence issues arise due to model complexity, we employ the \texttt{glmmTMB} package, which is generally more robust for large models with multiple random effects. 
The corresponding results are presented in Tables~\ref{table:corr_full_glmm_1} and~\ref{table:corr_full_glmm_2}.

To assess model adequacy, likelihood ratio tests (LRTs) are commonly applied in GLMMs. 
However, in this setting, such tests often become inapplicable or unreliable because optimization may fail to converge for certain reduced models. 
Furthermore, when the null hypothesis places the parameter on the boundary of the parameter space---and the corresponding estimates are close to this boundary---the standard LRT is known to be unreliable and requires specialized boundary-adjusted inference \citep{Self:1987}. 
Accordingly, ``NA'' is reported to indicate cases where significance testing is not available. 
Given that the estimated standard deviations cluster into two groups---one with relatively large values (Family, husEmpl, motherAge, childAge, birthOrd) and another approaching zero (Community, indig, momEd, husEd, toilet, TV)---we report significance levels using the standard LRT for the former group and the adjusted LRT for the latter.

\begin{table}[!h]
    \centering
    \begin{tabular}{cccccccccccc}
        \toprule
        \textbf{Estimate} & {Community} & {Family} & {indig} & {husEmpl} & {momEd} & {husEd} \\
        \midrule
        \textbf{Std.Dev.} & $0.0084$ & $24.2176$ & $0.0119$ & $1.0357$ & $0.0046$ & $0.0020$ \\
        \textbf{$p$-Value} & NA & NA & $0.4950$ & NA & NA & NA \\
        \bottomrule
    \end{tabular}
    
  \caption{Estimates of the random effects by the GLMM.}
  \label{table:corr_full_glmm_1}
\end{table}

\begin{table}[!h]
    \centering
    \begin{tabular}{cccccccccccc}
        \toprule
        \textbf{Estimate} & {toilet} & {TV} & {motherAge} & {childAge} & {birthOrd} \\
        \midrule
        \textbf{Std.Dev.} & $0.0127$ & $0.0019$ & $1.0726$ & $3.3576$ & $3.8209$ \\
        \textbf{$p$-Value} & NA & $1.0000$ & $0.8731$ & NA & $1.0000$ \\
        \bottomrule
    \end{tabular}
    
  \caption{Estimates of the random effects by the GLMM.}
  \label{table:corr_full_glmm_2}
\end{table}

While inference with the GLMM exhibits substantial limitations---both in the difficulty of conducting reliable statistical inference and in its inability to disentangle the contributions of individual covariates when exploring correlation structures---the results obtained when the GLMM is applicable are broadly consistent with those from our approach. In particular, both methods identify a significant family-level clustering effect and a moderate community-level clustering effect.

We conclude that our proposed approach offers several key advantages over standard GLMMs: it accommodates complex correlation structures with greater flexibility, facilitates straightforward testing of parameter significance, enhances interpretability of how specific covariates influence the correlation, and enables fine-grained analysis---such as distinguishing the effects of different categories of fathers’ employment status---which is often infeasible with conventional GLMM methods.

\section{Analysis of Grouse Ticks Data}\label{sec:real_3}

We apply our method to analyze a clustered dataset of count data --- the grouse ticks dataset from \cite{Elston:2001}, which documents a study of tick parasitism in red grouse chicks in Scotland from 1995 to 1997. The dataset is unbalanced, which contains 403 observations of tick infestation counts on individual chicks, nested within 118 distinct broods that are further grouped into 63 geographical locations. Thus, the dataset exhibit a two-level hierarchical clustering structure similar to the modern prenatal care dataset in Section~3 of the main part. Details of the dataset are shown in Table~\ref{table:data_3_desc} with the response variable underlined.

\begin{table}[!h]
    \centering
    \begin{tabular}{cc}
        \toprule
        \textbf{Variable}  & \textbf{Description} \\
        \midrule
        Index & Unique identifier for each chick \\
        \underline{Ticks} & Count of ticks found on each chick: $[0, 85]$ \\
        Brood & Unique identifier for each brood \\
        Location & Unique identifier for each location \\
        cHeight & Centered height above sea level for each location: $[-59.24,70.76]$ \\
        Year & Year for each observation: $0=1995,1=1996,2=1997$ \\
        \bottomrule
    \end{tabular}
    \caption{Description of variables of the grouse ticks dataset.}
    \label{table:data_3_desc}
\end{table}

It is worth mentioning that the dataset exhibits substantial over-dispersion ($\mathrm{Variance}=172.6615$ and $\mathrm{Mean}=6.3697$) and zero-inflated feature that violate the standard Poisson assumption; see Figure~\ref{fig:overdispersion}. 
\begin{figure}[!h]
    \centering
    \includegraphics[width=0.8\textwidth]{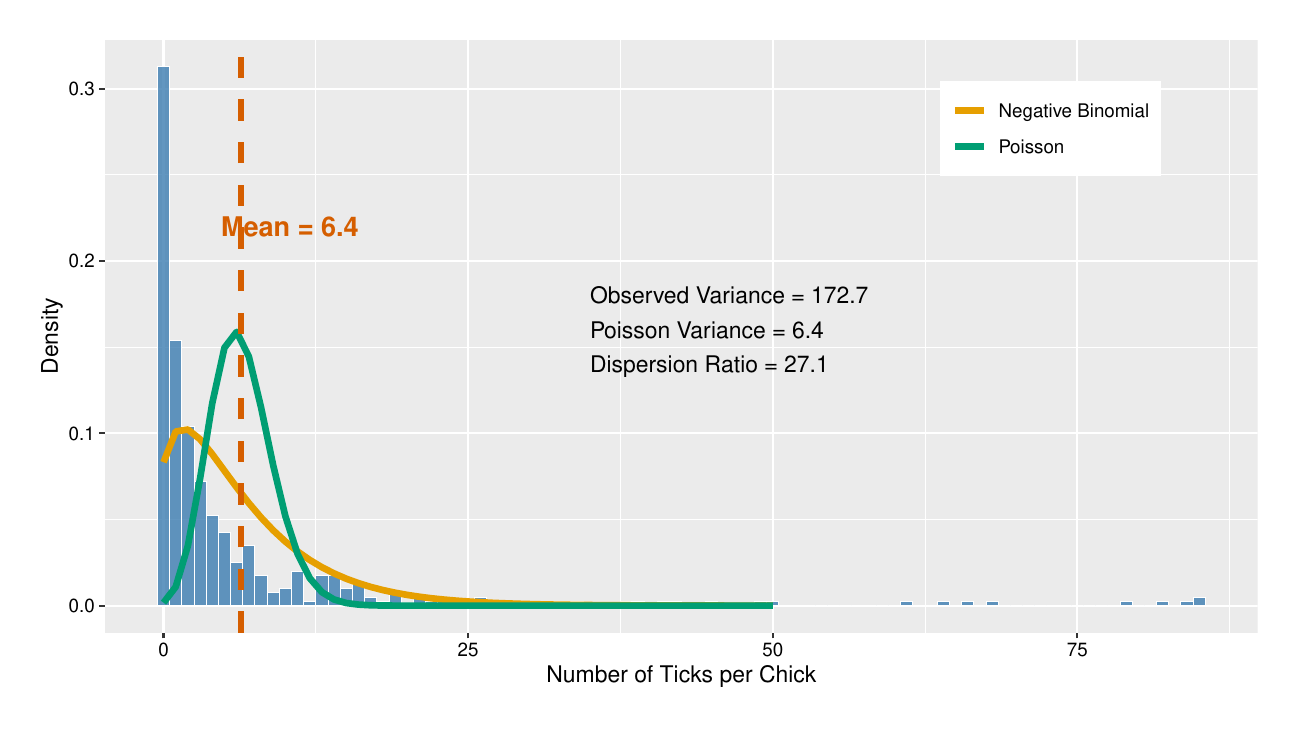}
    \caption{Over-dispersion and excess zeros in the grouse ticks dataset.}
    \label{fig:overdispersion}
\end{figure}

Let $y_{ij}$ denote the response for the $j$th chick at the $i$th location, with $\mu_{ij} = \mathbb{E}(y_{ij})$.  
We specify the model as  
\begin{align*}
    \log(\mu_{ij}) &= \beta_0 + \beta_1 \text{cHeight}_{ij} + \sum_{k=1}^2 \beta_{1+k} \, \mathbb{I}(\text{Year}_{ij} = k), \\
    \gamma_{ijk} &= \alpha_0 + \alpha_1 w_{ijk,1},
\end{align*}
where $i$ indexes locations, and $j$ and $k$ index chicks.  
The indicator $w_{ijk,1}$ equals 1 if chicks $j$ and $k$ are from the same brood and 0 otherwise.  
Here, $\alpha_0$ captures the within-location correlation, while $\alpha_1$ measures the additional correlation among chicks from the same brood.


To compare with the GLMM, we fit a zero-inflated negative binomial (ZINB) mixed-effects model using the R package \texttt{glmmTMB}.  
The model specification is:  
\begin{align*}
    y_{ijk} & \sim \mathrm{ZINB}(\mu_{ijk}, \theta, \pi_{ijk}), \\
    \log(\mu_{ijk}) &= \beta_0 + \beta_1 \text{cHeight}_{ijk} + \sum_{l=1}^2 \beta_{1+l} \, \mathbb{I}(\text{Year}_{ijk} = l) + u_i + v_{ij}, \\
    \mathrm{logit}(\pi_{ijk}) &= \gamma_0 + \gamma_1 \text{cHeight}_{ijk} + \sum_{l=1}^2 \gamma_{1+l} \, \mathbb{I}(\text{Year}_{ijk} = l),
\end{align*}
where $i$, $j$, and $k$ index locations, broods, and chicks, respectively;  
$\mu_{ijk} = \mathbb{E}(y_{ijk})$; $\theta$ is the dispersion parameter; and $\pi_{ijk}$ is the conditional probability of a structural zero.  
The random effects $u_i \sim \mathrm{N}(0, \sigma_u^2)$ and $v_{ij} \sim \mathrm{N}(0, \sigma_v^2)$ capture within-location and within-brood variation, respectively.  
For comparison, we also fit a Poisson mixed-effects model with the same specification but replacing the negative binomial distribution with the Poisson distribution.

The estimating results for our proposed method and the GLMM are reported in Table~\ref{table:data3_estimate}, together with their corresponding p-values.
\begin{table}[h]
    \centering
    \sisetup{
        table-format = -1.4,  
        table-space-text-post = {***}, 
        table-align-text-post = false, 
        table-number-alignment = center,
    }
    \resizebox*{1\textwidth}{!}{
    \begin{tabular}{@{} c 
        S[table-format=-1.4, table-column-width=2.5cm] 
        S[table-format=1.4, table-column-width=2.0cm]   
        S[table-format=-1.4, table-column-width=2.5cm] 
        S[table-format=1.4, table-column-width=2.0cm]   
        S[table-format=-1.4, table-column-width=2.5cm] 
        S[table-format=1.4, table-column-width=2.0cm]   
        @{}}
    \toprule
    \multirow{2}{*}{\textbf{Variable}} & 
    \multicolumn{2}{c}{\textbf{Proposed Method}} & 
    \multicolumn{2}{c}{\textbf{GLMM (ZINB)}} & 
    \multicolumn{2}{c}{\textbf{GLMM (Poisson)}} \\
    \cmidrule(lr){2-3} \cmidrule(lr){4-5} \cmidrule(lr){6-7}
    & \textbf{Estimate} & \textbf{P-value} & \textbf{Estimate} & \textbf{P-value} & \textbf{Estimate} & \textbf{P-value} \\
    \midrule
    Intercept ($\beta_0$) & 1.4693*** & 0.0000 & 0.7095** & 0.0017 & 0.4669* & 0.0150 \\
    cHeight ($\beta_1$) & -0.0231*** & 0.0000 & -0.0229*** & 0.0001 & -0.0235*** & 0.0000\\
    Year-1996 ($\beta_2$) & 0.5184*** & 0.0000 & 0.9848*** & 0.0000 & 1.1656*** & 0.0000\\
    Year-1997 ($\beta_3$) & -1.6823*** & 0.0000 & -1.1659*** & 0.0000 & -0.9779*** & 0.0001 \\
    \hdashline
    Location ($\alpha_0/\sigma_u$) & 0.0098 & 0.1713 & 0.5697 & 0.1135 & 0.5741 & 0.1381 \\
    Brood ($\alpha_1/\sigma_v$) & 0.2783*** & 0.0000 & 0.6748*** & 0.0000 & 0.7697*** & 0.0000 \\
    \bottomrule
    \end{tabular}
    }

    \medskip
    \flushleft{
        \small
        Notes: \textbf{.} for $\text{p} < 0.1$, * for $\text{p} < 0.05$, ** for $\text{p} < 0.01$ and *** for $\text{p} < 0.001$.
    }
    
    \caption{Parameter estimates and corresponding p-values of our method and the GLMM.}
    \label{table:data3_estimate}
\end{table}

From Tables~\ref{table:data3_estimate}, all three models lead to consistent conclusions:  
(i) the covariates \textit{cHeight} and \textit{Year} have the same sign and significance pattern across models; and  
(ii) there is a significant clustering effect at the brood level, whereas the location-level effect is insignificant.  

We also compare predictive accuracy using the mean absolute error (MAE):  
\begin{equation*}
    \mathrm{MAE} = \frac{1}{N} \sum_{i=1}^n \sum_{j=1}^{m_i} \left| y_{ij} - \hat{y}_{ij} \right|,
\end{equation*}
where $i$ indexes locations, $j$ indexes chicks, $N$ is the total number of observations, $y_{ij}$ is the observed count, and $\hat{y}_{ij}$ is the predicted mean count.  
For the GLMMs, $\hat{y}_{ij}$ corresponds to $\hat{\mu}_{ijk}$ in model notation, obtained using the \texttt{predict} function for \texttt{glmmTMB}.  

The results in Table~\ref{table:data3_compare} show that, despite severe over-dispersion and zero inflation, our proposed method achieves the smallest MAE, with the GLMM (Poisson) performing worst as expected.  
This demonstrates both the superior predictive accuracy and robustness of our approach in challenging applied settings.  
From an applied perspective, the ability to capture fine-scale brood-level clustering while maintaining predictive accuracy is valuable for ecological and wildlife management studies, where understanding group-specific risk factors (e.g., tick infestation in grouse chicks) is crucial for designing effective monitoring and intervention strategies.

\begin{table}[!h]
    \centering
      \begin{tabular}{cccc}
        \toprule
        \textbf{Criteria} & \textbf{Proposed Model} & \textbf{GLMM (ZINB)} & \textbf{GLMM (Poisson)} \\
        \midrule
        \textbf{MAE} & \underline{5.5691} & 8.4828 & 8.9188 \\
        \bottomrule
      \end{tabular}
  \caption{Model comparison for the grouse ticks dataset.}
  \label{table:data3_compare}
\end{table}

\end{appendix}

\end{document}